\documentclass[english,runningheads,envcountsame,a4paper]{llncs}
\usepackage[T1]{fontenc}
\usepackage[latin9]{inputenc}
\synctex=-1
\usepackage{array}
\usepackage{float}
\usepackage{amsmath}
\usepackage{amssymb}
\usepackage{graphicx}

\makeatletter

\newcommand{\noun}[1]{\textsc{#1}}
\providecommand{\tabularnewline}{\\}

\usepackage{amsmath}
\usepackage{rotating}
\usepackage{ifpdf} 
\usepackage{epstopdf}
\usepackage{hyperref}
\hypersetup{hidelinks}

\usepackage{caption}

\usepackage{complexity}

\DeclareMathOperator{\rng}{rng}

\DeclareMathOperator{\OM}{\Omega}

\providecommand{\openbox}{\leavevmode
  \hbox to.77778em{%
  \hfil\vrule
  \vbox to.675em{\hrule width.6em\vfil\hrule}%
  \vrule\hfil}}
\makeatletter
\DeclareRobustCommand{\dispqed}{%
  \ifmmode
    \eqno \def\@badmath{$$}
    \let\eqno\relax \let\leqno\relax \let\veqno\relax
    \hbox{\openbox}%
  \else
    \leavevmode\unskip\penalty9999 \hbox{}\nobreak\hfill
    \quad\hbox{\openbox}%
  \fi
}

\makeatother

\usepackage{babel}
\begin{document}
\title{Parameterized Complexity of Synchronization and Road Coloring} 
\titlerunning{Parameterized Complexity of Synchronization and Road Coloring}

\author{Vojt\v{e}ch Vorel\inst{1}\thanks{Corresponding author. Supported by the Czech Science Foundation grant GA14-10799S.} \and Adam Roman\inst{2}\thanks{Supported by the Polish Ministry of Science and Higher Education Iuventus Plus grant IP2012 052272.}} 

\authorrunning{Vojt\v{e}ch Vorel, Adam Roman} 
\tocauthor{Vojt\v{e}ch Vorel, Adam Roman} 

\institute{Faculty of Mathematics and Physics, Charles University, Malostransk\'{e} n\'{a}m. 25, Prague, Czech Republic,\\ \email{vorel@ktiml.mff.cuni.cz}, 
\and Institute of Computer Science, Jagiellonian University,  Lojasiewicza 6, 30-348 Krakow, Poland, \\ \email{roman@ii.uj.edu.pl}}
 
\maketitle  
\begin{abstract} 

First, we close the multivariate analysis of a canonical problem concerning short reset words (SYN), as it was started by Fernau et al. (2013). Namely, we prove that the problem, parameterized by the number of states, does not admit a polynomial kernel unless the polynomial hierarchy collapses.
Second, we consider a related canonical problem concerning synchronizing road colorings (SRCP). Here we give a similar complete multivariate analysis. Namely, we show that the problem, parameterized by the number of states, admits a polynomial kernel and we close the previous research of restrictions to particular values of both the alphabet size and the maximum word length.

\end{abstract}

\section{Introduction}

Questions about synchronization of finite automata has been studied
since the early times of automata theory. The basic concept is very
natural: For a given machine, we want to find an input sequence that
would get the machine to some particular state, no matter in which
state the machine was before. Such sequence is called a \emph{reset
word}%
\footnote{Some authors use other terms like \emph{synchronizing word} or \emph{directing
word}.%
}. If an automaton has some reset word, we call it a \emph{synchronizing}
\emph{automaton}. A need for finding reset words appears in several
fields of mathematics and engineering. Classical applications (see
\cite{VOL1}) include model-based testing of sequential circuits,
robotic manipulation, and symbolic dynamics, but there are important
connections also with information theory \cite{TRS1} and with formal
models of biomolecular processes \cite{BON1}.

Two particular problems concerning synchronization has gained some
publicity: the Road Coloring Problem and the \v{C}ern\'{y} conjecture.
The first has been solved by Trahtman \cite{TRA6} in 2008 by proving
that the edges of any aperiodic directed multigraph with constant
out-degree (that is, any \emph{admissible} \emph{graph}) can be colored
such that a synchronized automaton arises. Motivation for this problem
comes from symbolic dynamics \cite{ADL1}. On the other hand, the
\v{C}ern\'{y} Conjecture remains open since 1971 \cite{CER1,CER2}.
It claims that any $t$-state synchronized automaton has a reset word
of length at most $\left(t-1\right)^{2}$. 

In the practical applications of synchronization, one may need to
compute a reset word for a given automaton, and moreover, the reset
word should be as short as possible. The need to compute a synchronizing
labeling for an admissible graph, possibly with a request for a short
reset word, may arise as well. It turns out, as we describe below,
that such computational problems are typically NP-hard, even under
various heavy restrictions. 

Parameterized Complexity offers various notions and useful tools that
has became standard in modern analysis of NP-complete problems. The
problems are studied with respect to numerical attributes (\emph{parameters})
of the instances. A \emph{multivariate analysis} considers more than
one such parameter. We close the multivariate analysis for canonical
problems related to synchronization and road coloring. Since the instances
of our problems consist of automata, word lengths and admissible graphs,
the natural parameters are: number of states, alphabet size, and word
length (see the definitions on Page \pageref{defs} and a summary
of the results in Tab. \ref{tab:Complexities-of-SYN} and Tab. \ref{tab:Complexities-of-SRCP}).

In the task to find a synchronized coloring of an admissible graph,
one may also fix a particular reset word to be used in the labeling.
We prove that for the word $abb$ the problem becomes NP-hard though
the corresponding basic variant is decidable in polynomial time (see
Tab. \ref{tab:Complexities-of-SRCPW}).\renewcommand{\arraystretch}{1.6}{\intextsep=19pt
\begin{table}
\begin{centering}
\begin{tabular}{|>{\centering}p{17mm}|>{\centering}p{46mm}>{\raggedleft}p{5mm}|>{\centering}m{43mm}>{\raggedleft}p{5mm}|}
\hline 
\textbf{Parameter} & \multicolumn{2}{>{\centering}m{51mm}|}{\textbf{Parameterized Complexity \qquad{}of~SYN\qquad{}}} & \multicolumn{2}{>{\centering}m{48mm}|}{\textbf{Polynomial Kernel \qquad{}of~SYN\qquad{}}}\tabularnewline
\hline 
$k$ & $\mathrm{W}\!\left[2\right]$-hard & \cite{FER1} & \multicolumn{2}{c|}{\textemdash{}}\tabularnewline
\hline 
$\left|I\right|$ & ~NP-complete for $\left|I\right|=2,3,\dots$ & \cite{EPP1} & \multicolumn{2}{c|}{\textemdash{}}\tabularnewline
\hline 
$k$ and $\left|I\right|$ & FPT, running time $\mathcal{O}^{\star}\!(\left|I\right|^{k})$~~ & \hspace{-9bp}{[}triv\hspace{-1bp}.{]} & ~Not unless $\mathrm{NP}\subseteq\mathrm{coNP}/\mathrm{poly}$~ & \cite{FER1}\tabularnewline
\hline 
$t$ & FPT, running time $\mathcal{O}^{\star}\!(2^{t})$~~ & \hspace{-9bp}{[}triv\hspace{-1bp}.{]} & Not unless PH collapses & $\blacklozenge$\hspace*{2bp}~\tabularnewline
\hline 
\end{tabular}\vspace{2mm}

\par\end{centering}

\begin{centering}
\begin{tabular}{|>{\centering}p{17mm}|>{\centering}p{46mm}>{\raggedleft}p{5mm}|>{\centering}p{43mm}>{\raggedleft}p{5mm}|}
\hline 
\textbf{Parameter} & \multicolumn{2}{>{\centering}m{51mm}|}{\textbf{Parameterized Complexity \qquad{}of~SRCP\qquad{}}} & \multicolumn{2}{>{\centering}m{48mm}|}{\textbf{Polynomial Kernel \qquad{}of~SRCP\qquad{}}}\tabularnewline
\hline 
$k$ & ~NP-complete for $k=4,5,\dots$ & \cite{ROM8} & \multicolumn{2}{c|}{\textemdash{}}\tabularnewline
\hline 
$\left|I\right|$ &  \textbf{~}NP-complete for $\left|I\right|=2,3,\dots$ & $\blacklozenge$\hspace*{2bp}~ & \multicolumn{2}{c|}{\textemdash{}}\tabularnewline
\hline 
$k$ and $\left|I\right|$ & \multicolumn{2}{>{\centering}p{46mm}|}{\emph{See }Tab.\emph{ \ref{tab:Complexities-of-SRCP}}} & \multicolumn{2}{c|}{\textemdash{}}\tabularnewline
\hline 
$t$ & FPT, running time \textbf{$\mathcal{O}^{\star}\!(2^{\left|I\right|})$} & $\blacklozenge$\hspace*{2bp}~ & \qquad{}Yes & $\blacklozenge$\hspace*{2bp}~\tabularnewline
\hline 
\end{tabular}\medskip{}

\par\end{centering}

\caption{\label{tab:Complexities-of-SYN}Results of the complete multivariate
analysis of SYN and SRCP. Diamonds mark the results of the present
paper}
\end{table}

}
\begin{table}[H]
\begin{centering}
\begin{tabular}{|>{\centering}p{20mm}|>{\centering}m{17mm}>{\centering}m{5mm}|>{\centering}m{17mm}>{\centering}m{5mm}|>{\centering}m{17mm}>{\centering}m{5mm}|}
\cline{2-7} 
\multicolumn{1}{>{\centering}p{20mm}|}{} & \multicolumn{2}{c|}{$k=2$} & \multicolumn{2}{c|}{$k=3$} & \multicolumn{2}{c|}{$k=4,5,\dots$}\tabularnewline
\hline 
$\left|I\right|=2$ & \qquad{}P & \cite{ROM8conf} & \qquad{}P & $\blacklozenge$ & \qquad{}NPC & $\blacklozenge$\tabularnewline
\hline 
$\left|I\right|=3$ & \qquad{}P & \cite{ROM8conf} & \qquad{}P & \cite{ROM8} & \qquad{}NPC & \cite{ROM8conf}\tabularnewline
\hline 
$\left|I\right|=4,5,\dots$ & \qquad{}P & \cite{ROM8conf} & \qquad{}P & \cite{ROM8} & \qquad{}NPC & \cite{ROM8conf}\tabularnewline
\hline 
\end{tabular}
\par\end{centering}

\medskip{}

\caption{\label{tab:Complexities-of-SRCP}Complexities of SRCP restricted to
particular values of $k$ and $\left|I\right|$. The cases with $k=1$
or $\left|I\right|=1$ are trivial}
\end{table}

\section{Preliminaries}

\subsection{Automata and Synchronization}

A \emph{deterministic finite automaton }is a triple $A=\left(Q,I,\delta\right)$,
where $Q$ and $I$ are finite sets and $\delta$ is an arbitrary
mapping $Q\times I\rightarrow Q$. Elements of $Q$ are called \emph{states},
$I$ is the \emph{alphabet}. The \emph{transition function} $\delta$
can be naturally extended to $Q\times I^{\star}\rightarrow Q$, still
denoted by $\delta$, slightly abusing the notation. We extend it
also by defining 
\[
\delta\!\left(S,w\right)=\left\{ \delta\!\left(s,w\right)\mid s\in S,w\in I^{\star}\right\} 
\]
 for each $S\subseteq Q$. If an automaton $A=\left(Q,I,\delta\right)$
is fixed, we write
\[
r\overset{x}{\longrightarrow}\, s
\]
instead of $\delta\left(r,x\right)=s$.

For a given automaton $A=\left(Q,I,\delta\right)$, we call $w\in I^{\star}$
a \emph{reset word} if 
\[
\left|\delta\!\left(Q,w\right)\right|=1.
\]
If such a word exists, we call the automaton \emph{synchronizing}.
Note that each word having a reset word as a factor is also a reset
word.

The \emph{\v{C}ern� conjecture}, a longstanding open problem, claims
that each synchronizing automaton has a reset word of length at most
$\left(\left|Q\right|-1\right)^{2}$. There is a series of automata
due to \v{C}ern� whose shortest reset words reach this bound exactly
\cite{CER1}, but all known upper bounds lie in $\OM\left(\left|Q\right|^{3}\right)$.
A tight bound has also been established for various special classes
of automata, see some of recent advances in \cite{GRE1,STE5}. The
best general upper bound of the length of shortest reset words is
currently the following%
\footnote{An improved bound published by Trahtman \cite{TRA1} in 2011 has turned
out to be proved incorrectly.%
}: 
\begin{theorem}
[\cite{PIN2}]\label{thm:z(t)}Any $t$-state synchronizing automaton
has a reset word of length $z\!\left(t\right)$, where
\[
z\!\left(t\right)=\frac{t^{3}-t}{6}.
\]

\end{theorem}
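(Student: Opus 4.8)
The plan is to reduce the bound to a single ``one-step contraction'' estimate and then run the classical extremal-set argument of Frankl and Pin. Call a word $w$ \emph{contracting on} $S\subseteq Q$ when $\left|\delta(S,w)\right|<\left|S\right|$. First I would record the trivial reduction: starting from $Q$ and repeatedly appending a shortest word that is contracting on the current image set, one reaches a singleton in at most $\left|Q\right|-1$ rounds, so any bound $c(\left|Q\right|,k)$ on the length of a shortest word contracting on a fixed $k$-element set yields a reset word of length at most $\sum_{k=2}^{\left|Q\right|}c(\left|Q\right|,k)$ (the sizes that occur form a subset of $\{2,\dots,\left|Q\right|\}$ containing $\left|Q\right|$, and $c\geq 0$, so the worst case is that every size occurs). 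Hence everything reduces to the following Key Lemma: for every $S\subseteq Q$ with $\left|S\right|=k\geq2$ there is a word contracting on $S$ of length at most $\binom{\left|Q\right|-k+2}{2}$. Indeed $\sum_{k=2}^{t}\binom{t-k+2}{2}=\sum_{j=2}^{t}\binom{j}{2}=\binom{t+1}{3}=\frac{t^{3}-t}{6}$ by the hockey-stick identity, which is exactly $z(t)$.

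For the Key Lemma, write $t=\left|Q\right|$, fix $S$ with $\left|S\right|=k$, and (the case $k=t$ being trivial, since a synchronizing automaton must have a letter acting non-bijectively) assume $k<t$. Let $w=a_{1}a_{2}\cdots a_{m}$ be a \emph{shortest} word contracting on $S$ and put $S_{i}=\delta(S,a_{1}\cdots a_{i})$; I must show $m\leq\binom{t-k+2}{2}$. Minimality forces $\left|S_{i}\right|=k$ for all $i\leq m-1$, so each $\delta(\cdot,a_{i})$ restricts to a bijection $S_{i-1}\to S_{i}$ for $1\leq i\leq m-1$, whereas $\delta(\cdot,a_{m})$ identifies some pair in $S_{m-1}$. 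Pulling that pair back through the chain of bijections produces, for every $i\in\{1,\dots,m\}$, a two-element set $X_{i}\subseteq S_{i-1}$ on which $\delta(\cdot,a_{i}a_{i+1}\cdots a_{m})$ is constant; alongside it I set $Y_{i}=Q\setminus S_{i-1}$, so $\left|X_{i}\right|=2$ and $\left|Y_{i}\right|=t-k$.

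Finally I would invoke the skew version of the Bollob\'as set-pair inequality (only the uniform case is needed): if $X_{i}\cap Y_{i}=\emptyset$ for all $i$ and $X_{i}\cap Y_{j}\neq\emptyset$ whenever $i>j$, then $m\leq\binom{\left|X_{i}\right|+\left|Y_{i}\right|}{\left|X_{i}\right|}=\binom{t-k+2}{2}$, which is the lemma. The disjointness $X_{i}\cap Y_{i}=\emptyset$ is immediate from $X_{i}\subseteq S_{i-1}$. The crossing condition is the only substantive point, and it is where the minimality of $w$ is spent: if $j<i$ and yet $X_{i}\cap Y_{j}=\emptyset$, i.e.\ $X_{i}\subseteq S_{j-1}$, then the spliced word $a_{1}\cdots a_{j-1}\,a_{i}a_{i+1}\cdots a_{m}$ sends $S$ onto $S_{j-1}$ and then merges the two states of $X_{i}$, hence is contracting on $S$, but its length $m-(i-j)$ is strictly smaller than $m$ --- a contradiction. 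I expect this packaging step to be the main obstacle: the length of a shortest contracting word has to be recognized as the length of an antichain-like system of set pairs, after which the extremal-set inequality does the counting; the reduction and the arithmetic in the first paragraph are routine.
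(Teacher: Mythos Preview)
The paper does not actually prove this theorem; it merely cites it as a known result of Pin (reference \cite{PIN2}) and uses the bound $z(t)=(t^{3}-t)/6$ as a black box elsewhere. So there is no in-paper proof to compare against.

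That said, your write-up is a faithful rendering of the classical Frankl--Pin argument and is correct. The greedy decomposition into at most $t-1$ contraction steps, the Key Lemma bounding the length of a shortest word contracting a $k$-set by $\binom{t-k+2}{2}$, and the hockey-stick summation to $\binom{t+1}{3}$ are all standard and accurate. In the Key Lemma, your definitions of the pairs $(X_i,Y_i)$ are right, the disjointness $X_i\cap Y_i=\emptyset$ and the crossing condition $X_i\cap Y_j\neq\emptyset$ for $i>j$ are verified exactly as you say (the splicing argument is the standard use of minimality), and the uniform skew Bollob\'as inequality then gives $m\le\binom{t-k+2}{2}$. The only small thing worth stating explicitly is that for $k<t$ a contracting word on $S$ exists at all---this is immediate from the existence of a reset word, but you might add a word to that effect.
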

It is convenient to analyze synchronization as a process in discrete
time. Having an automaton $A=\left(Q,I,\delta\right)$ and a word
\[
w=x_{1}\dots x_{\left|w\right|}
\]
fixed, we say that a state $s\in Q$ is \emph{active at time $l\le\left|w\right|$}
if 
\[
s\in\delta\!\left(Q,x_{1}\dots x_{l}\right).
\]
At time $0$, before the synchronization starts, all the states are
active. As we apply the letters, the number of active states may decrease.
We may consider that active states are identified by \emph{activity
markers}, which move along appropriate transitions whenever a letter
is applied. If two activity markers meet each other, they just merge.

When the number of active states decreases to $1$ at a time $l$,
synchronization is complete and the word $x_{1}\dots x_{l}$ is a
reset word.

\subsection{Synchronizing Road Coloring}

A directed multigraph is: 
\begin{enumerate}
\item \emph{aperiodic graph }if the lengths of its cycles do not have any
nontrivial common divisor.
\item \textit{admissible graph }\textit{\emph{if it is aperiodic and all
its out-degrees are equal.}}
\item \textit{road colorable graph }\textit{\emph{if its edges can be labeled
such that a synchronized deterministic finite automaton arises.}}
\end{enumerate}
It is not hard to observe that any road colorable graph is admissible.
In 1977 Adler, Goodwyn and Weiss\cite{ADL1} conjectured that the
backward implication holds as well. Their question became known as
the Road Coloring Problem and a positive answer was given in 2007
by Trahtman \cite{TRA6}:
\begin{theorem}
[Road Coloring Theorem]Any admissible graph is road colorable.
\end{theorem}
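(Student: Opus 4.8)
The plan is to reconstruct Trahtman's argument, whose organizing idea is the notion of a \emph{stable pair}, proceeding by induction on the number of vertices $n$. First I would clear away trivialities: one should read ``admissible'' as also including strong connectivity (as in the original Adler--Goodwyn--Weiss formulation), since a graph with two distinct terminal strong components can never be synchronized; and if the common out-degree $k$ equals $1$, then strong connectivity together with aperiodicity forces $\Gamma$ to be a single directed cycle, which is aperiodic only when $n=1$, and that case is trivial. So assume $k\ge 2$. A coloring of $\Gamma$ is just a deterministic automaton $A=(Q,I,\delta)$ with $Q=V(\Gamma)$, $|I|=k$, and transitions the edges of $\Gamma$; call a pair of distinct states $\{p,q\}$ \emph{stable} if for every word $u$ there is a word $v$ with $\delta(p,uv)=\delta(q,uv)$. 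Three elementary facts are used repeatedly: a stable pair has a merging word (take $u$ empty); if every pair is stable then $A$ is synchronizing (iteratively merge two states of the current image); and the stability relation $\rho_A$ is a congruence of $A$ (symmetry is clear, and transitivity and letter-compatibility follow by concatenating witnesses). In particular, if $\rho_A=Q\times Q$ then $A$ is synchronizing.

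The induction would be arranged so that it suffices to prove: \emph{every admissible $\Gamma$ with $k\ge 2$ has some coloring possessing a nontrivial stable pair.} Indeed, given such a coloring $A$: if $\rho_A=Q\times Q$ we are done; otherwise form the quotient $\bar\Gamma=\Gamma/\rho_A$, the underlying graph of $A/\rho_A$. It has out-degree $k$, is strongly connected, and is aperiodic --- each cycle of $\Gamma$ projects to a closed walk of $\bar\Gamma$, so its length is a sum of cycle lengths of $\bar\Gamma$, whence $\gcd$ of the cycle lengths of $\bar\Gamma$ divides $\gcd$ of those of $\Gamma$, which is $1$. Thus $\bar\Gamma$ is admissible with $1<|V(\bar\Gamma)|<n$, so by the induction hypothesis it has a synchronizing coloring $\bar B$; since for each $q$ the multiset of $\rho_A$-classes reached from $q$ in one step is independent of the chosen representative of $[q]$ (because $\rho_A$ is a congruence of $A$), one can pull the colors of $\bar B$ back onto $\Gamma$ to obtain a coloring $B$ with $B/\rho_A=\bar B$. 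As $\bar B$ is synchronizing, $B$ drives $Q$ into a single $\rho_A$-class, and one is left to merge the states inside that class --- which works provided the lift is chosen so that $\rho_A$ stays inside the stability relation of $B$. Granting this, $B$ synchronizes $\Gamma$.

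The heart of the matter is therefore the combinatorial lemma, for which I would use the extremal/recoloring approach. For a coloring $A$ and a color $\alpha$, the $\alpha$-edges form a functional subgraph $\Gamma_\alpha$, whose vertices split into those lying on its (disjoint) cycles --- an $N_\alpha$-element set --- and the rest, which form in-trees attached to those cycles. Pick $A$ and $\alpha$ maximizing $N:=N_\alpha$. If $N<n$, choose a vertex $v$ at maximal tree-distance from the cycles and inspect the edges entering $v$; a single recoloring --- swapping which edge out of some vertex carries color $\alpha$ --- either strictly increases $N$, contradicting maximality, or identifies two vertices under a suitable power of $\alpha$, producing a stable pair. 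If $N=n$, so $\Gamma_\alpha$ is a permutation (a union of spanning cycles), then aperiodicity of $\Gamma$ enters decisively: the cycle lengths of $\Gamma_\alpha$ cannot all share a nontrivial common factor once the other colors are accounted for, and a recoloring yields a single spanning $\alpha$-cycle carrying a ``chord'' whose displacement along the cycle is coprime to $n$; pumping $\alpha$ around the cycle then collapses a pair, giving the required stable pair.

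The main obstacle, unsurprisingly, is this combinatorial lemma --- in particular, organizing the recoloring arguments in both the $N<n$ and $N=n$ cases and, above all, pinning down exactly how aperiodicity is exploited to \emph{force} a stable pair; this is the genuinely hard part of Trahtman's proof and does not reduce to routine calculation. A secondary delicate point is choosing the lift of $\bar B$ in the inductive step so that the collapsed pairs remain mergeable in $B$. Everything else --- the congruence properties of stability, the admissibility of the quotient, the equivalence between ``all pairs stable'' and ``synchronizing'', and the base cases --- is straightforward.
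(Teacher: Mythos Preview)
The paper does not give its own proof of this theorem: it is stated as a cited result of Trahtman, with no argument beyond the reference. So there is no ``paper's proof'' to compare against; what you have written is an outline of Trahtman's actual proof, which is precisely what the paper invokes.

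As such an outline, your architecture is correct: stable pairs, the stability congruence, quotienting, and induction on the number of states is exactly Trahtman's scheme, and you are right that the recoloring lemma producing a stable pair is where all the work lies. Your description of that lemma is admittedly impressionistic (the case split on whether the maximal $\alpha$-subgraph is spanning, and the role of aperiodicity, are gestured at rather than carried out), but you flag this yourself, so it is an honest sketch rather than a gap you have missed.

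The point you call ``secondary delicate'' --- ensuring that $\rho_A$-related pairs remain stable in the lifted coloring $B$ --- is in fact not something one must ``grant''; it follows cleanly once you observe two things. First, by the very construction of the lift, $\rho_A$ is a congruence of $B$ (for every letter $a$, $\delta_B(\cdot,a)$ respects $\rho_A$-classes), so for any word $u$ the states $\delta_B(p,u)$ and $\delta_B(q,u)$ are again $\rho_A$-related whenever $p,q$ are. Second, mergeability of a pair (existence of equal-length paths to a common vertex) is a property of the underlying graph $\Gamma$, independent of the coloring; since $\rho_A$-related pairs are mergeable in $A$, they are mergeable in $\Gamma$, hence mergeable in $B$. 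Combining these, every $\rho_A$-related pair is stable in $B$, i.e.\ $\rho_A\subseteq\rho_B$, and the induction closes. You should state this explicitly rather than leaving it as a caveat.
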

In the literature it is common to consider only strongly connected
graphs since many general claims can be easily reduced to the corresponding
claims about strongly connected cases. We do not admit such restriction
explicitly, because in the scope of computational problems it does
not seem very natural. However, all the results hold with the restriction
as well. Especially the NP-completeness proofs have been made slightly
more complicated in order to use strongly connected graphs only.

\subsection{Parameterized Complexity}

In most of the paper, we do not need to work with any formal definition
of a \emph{parameterized problem}. We see it as a classical decision
problem where we consider some special numerical property (\emph{parameter})
of each input. Parameterized complexity is the study of the way in
which the hardness of an NP-complete problem relies on the parameter.
A problem may remain NP-hard even if restricted to instances with
a particular value of the parameter or there may be a distinct polynomial-time
algorithm for each such value (such problems form the class \emph{XP}).
In the second case, if the polynomials are all of the same degree,
we get into the class \emph{FPT}: 

A parameterized problem \emph{is} \emph{fixed-parameter tractable}
(\emph{FPT}) if there is an algorithm that decides it in time 
\[
f\!\left(P\right)\cdot r\!\left(\left|x\right|\right)
\]
where $x$ is the input string, $P\in\mathbb{N}$ its parameter, $r$
is an appropriate polynomial, and $f$ is any computable function.
If there is more than one possible parameter for a problem, one may
consider \emph{combinations} of the parameters. A problem is FPT with
respect to parameters $P,Q$ if it is decidable in time
\[
f\!\left(P,Q\right)\cdot r\!\left(\left|x\right|\right).
\]
This is typically much less restrictive condition than the previous
one, where $f$ depends on $P$ only. 

There is a hierarchy of problems (the \emph{W-hierarchy}) lying in
XP but possibly outside FPT. It consists of the classes $\mathrm{W}\!\left[1\right],\mathrm{W}\!\left[2\right],\dots$:
\begin{equation}
\mathrm{FPT}\subseteq\mathrm{W}\!\left[1\right]\subseteq\mathrm{W}\!\left[2\right]\subseteq\dots\subset\mathrm{XP}.\label{eq:whi}
\end{equation}
Since it has been conjectured that all the inclusions are proper,
it is common to use $\mathrm{W}\!\left[k\right]$-hardness (with respect
to an appropriate type of reduction) as an evidence of lying outside
$\mathrm{FPT}$. However, we do not need to define the W-hierarchy
here since it is used only for the preceding results (see Tab. \ref{tab:Complexities-of-SYN}),
not for the new ones. See the textbook \cite{DF1} for the definitions
and many other great ideas of parameterized complexity.

A \emph{kernel} of a parameterized problem is a polynomial-time procedure
that transforms any input $x$ of the problem to another input $y$
such that the length and the parameter of $y$ are bounded by some
function $f$ of the parameter associated with $x$. Having a kernel
is equivalent to lying in FPT. If the function $f$ is a polynomial,
we get a \emph{polynomial kernel}.

\subsection{Studied Problems}

In this paper we work with two canonical computational problems related
to synchronization (\noun{SYN}) and road coloring (\noun{SRCP}). The
problems are defined as follows:

\renewcommand{\arraystretch}{1.6}

\begin{flushleft}
\label{defs}%
\begin{tabular}{|>{\raggedright}p{25mm}>{\raggedright}p{94mm}|}
\hline 
\multicolumn{2}{|l|}{\noun{~~SYN}}\tabularnewline
\textbf{~~Input:} & Automaton $A=\left(Q,I,\delta\right)$, $k\in\mathbb{N}$\tabularnewline
\textbf{~~Output:} & Is there $w\in I^{\star}$ of length at most $k$ such that $\left|\delta\!\left(Q,w\right)\right|=1$?\tabularnewline[2mm]
~~\textbf{Parameters:} & $k$, $\left|I\right|$, $t=\left|Q\right|$\tabularnewline[2mm]
\hline 
\end{tabular}
\par\end{flushleft}

\begin{flushleft}
\begin{tabular}{|>{\raggedright}p{25mm}>{\raggedright}p{94mm}|}
\hline 
\multicolumn{2}{|l|}{\noun{~~SRCP}}\tabularnewline
\textbf{~~Input:} & Alphabet $I$, admissible graph $G=\left(Q,E\right)$ with out-degrees
$\left|I\right|$, $k\in\mathbb{N}$\tabularnewline
\textbf{~~Output:} & Is there a coloring $\delta$ such that there is $w\in I^{\star}$
of length at most $k$ such that $\left|\delta\!\left(Q,w\right)\right|=1$?\tabularnewline[2mm]
~~\textbf{Parameters:} & $k$, $\left|I\right|$, $t=\left|Q\right|$\tabularnewline[2mm]
\hline 
\end{tabular}
\par\end{flushleft}

We will need the following basic facts related to \noun{SYN}:
\begin{theorem}
[\cite{CER1}]\label{lem:polAlg}There is a polynomial-time algorithm
that decides whether a given automaton is synchronizing.
\end{theorem}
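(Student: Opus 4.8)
The plan is to reduce the question to a reachability problem in an auxiliary automaton of polynomial size. First I would prove the standard characterization: an automaton $A=\left(Q,I,\delta\right)$ is synchronizing if and only if for every two states $p,q\in Q$ there exists a word $w\in I^{\star}$ with $\delta\!\left(p,w\right)=\delta\!\left(q,w\right)$. The forward implication is immediate, since a reset word merges every pair of states. For the converse I would argue by induction on $\left|S\right|$ that every nonempty $S\subseteq Q$ admits a word collapsing it to a single state: if $\left|S\right|\geq 2$, choose distinct $p,q\in S$, take a word $u$ with $\delta\!\left(p,u\right)=\delta\!\left(q,u\right)$, and apply the induction hypothesis to $\delta\!\left(S,u\right)$, whose cardinality is strictly smaller because applying a letter never increases the number of active states. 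Instantiating $S=Q$ yields a reset word.

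Second, I would construct the \emph{pair automaton} $A^{(2)}$: its states are the subsets of $Q$ of size $1$ or $2$, of which there are at most $\binom{t}{2}+t$ with $t=\left|Q\right|$, and its transitions are $\{p,q\}\overset{a}{\longrightarrow}\{\delta\!\left(p,a\right),\delta\!\left(q,a\right)\}$ for every letter $a\in I$, where the target is a singleton precisely when $a$ merges $p$ and $q$. By the characterization above, $A$ is synchronizing if and only if in $A^{(2)}$ some singleton is reachable from every two-element state. I would decide this with a single backward graph search started from the set of singletons, collecting all subsets from which a singleton is reachable and checking that this set contains every two-element subset of $Q$.

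Third, I would bound the running time: $A^{(2)}$ has $\mathcal{O}(t^{2})$ states and $\mathcal{O}(t^{2}\left|I\right|)$ transitions, it is computable from $A$ in polynomial time, and the backward search is linear in its size; hence the whole procedure runs in time polynomial in the size of the input automaton.

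I do not expect a genuine technical obstacle; the one point to get right is the equivalence itself — that synchronizability, a priori a statement about the exponentially large power set of $Q$, is captured by the polynomially sized condition on pairs — after which the algorithm and its analysis are routine. (As a byproduct, taking shortest paths in $A^{(2)}$ at each of the at most $t-1$ merging steps produces a reset word of length $\mathcal{O}(t^{3})$, in line with the cubic-order bounds discussed above, although this is not needed for the statement.)
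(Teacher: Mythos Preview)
Your proposal is correct and follows the standard argument: the pairwise characterization of synchronizability together with a backward search in the pair automaton. The paper itself does not prove this statement at all; it is merely quoted as a classical result from \cite{CER1} and used as a black box (via Corollary~\ref{cor:polAlgSYN}) in the kernelization arguments. There is therefore nothing in the paper to compare against beyond noting that your sketch is the expected textbook proof, and the parenthetical remark about the $\mathcal{O}(t^{3})$ reset word recovered as a byproduct aligns with the cubic bound the paper invokes in Theorem~\ref{thm:z(t)}.
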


\begin{corollary}
\noun{\label{cor:polAlgSYN}Syn}, if restricted to the instances with
$d\ge z\!\left(t\right)=\frac{n^{3}-n}{6}$, is solvable in polynomial
time.
\end{corollary}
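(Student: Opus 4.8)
The plan is to observe that the length bound in a \noun{SYN} instance becomes vacuous as soon as it reaches the universal upper bound guaranteed by Theorem~\ref{thm:z(t)}, so that the restricted problem collapses to deciding plain synchronizability, which is polynomial-time solvable by Theorem~\ref{lem:polAlg}.

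Concretely, I would take an instance $\left(A,k\right)$ of \noun{SYN} with $A=\left(Q,I,\delta\right)$, $t=\left|Q\right|$, and $k\ge z\!\left(t\right)=\frac{t^{3}-t}{6}$, and argue by the following dichotomy. If $A$ is synchronizing, then by Theorem~\ref{thm:z(t)} it admits a reset word $w$ with $\left|w\right|\le z\!\left(t\right)\le k$, so $\left(A,k\right)$ is a yes-instance. If $A$ is not synchronizing, then no word $w$ satisfies $\left|\delta\!\left(Q,w\right)\right|=1$ at all, so $\left(A,k\right)$ is a no-instance irrespective of $k$. Hence, on the restricted family, $\left(A,k\right)$ is a yes-instance if and only if $A$ is synchronizing. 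The resulting algorithm computes $z\!\left(t\right)$, checks the single (polynomial-time, even for $k$ encoded in binary) arithmetic inequality $k\ge z\!\left(t\right)$ to confirm that the instance lies in the restricted family, and then returns the output of the polynomial-time synchronizability test of Theorem~\ref{lem:polAlg} applied to $A$.

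There is no genuine obstacle here; the statement is an immediate corollary. The only point that must be made explicit is that the bound of Theorem~\ref{thm:z(t)} depends on $t$ alone, so the hypothesis $k\ge z\!\left(t\right)$ really does render the length constraint irrelevant — after which everything reduces to the black-box invocation of Theorem~\ref{lem:polAlg}.
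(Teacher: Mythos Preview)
Your argument is correct and is exactly the intended one: the paper states the corollary without proof, relying on the obvious observation that once $k\ge z(t)$ the length constraint is vacuous by Theorem~\ref{thm:z(t)}, so the question reduces to the polynomial-time synchronizability test of Theorem~\ref{lem:polAlg}.
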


\begin{theorem}
[\cite{EPP1}]\noun{Syn} is NP-complete, even if restricted to automata
with two-letter alphabets. 
\end{theorem}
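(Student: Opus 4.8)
The plan is to establish membership in NP and then NP-hardness by a polynomial-time reduction from \noun{3-Sat} that already produces automata over two-letter alphabets.

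Membership in NP requires a small observation, since $k$ may be written in binary and a reset word of length $k$ could then be exponentially long. A yes-instance has, by definition, a reset word of length at most $k$; being synchronizing, it also has one of length at most $z(t)=(t^{3}-t)/6$ by Theorem~\ref{thm:z(t)}; the shorter of the two has length at most $\min\!\left(k,z(t)\right)\le z(t)$, hence polynomial in the input length, and one verifies in polynomial time that it collapses $Q$ to one state by iterating the (extended) transition function on subsets of $Q$. So \noun{SYN}~$\in$~NP; in fact, for instances with $k\ge z(t)$ the problem is outright decidable in polynomial time by Corollary~\ref{cor:polAlgSYN}.

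For NP-hardness I would reduce from \noun{3-Sat}. One should first notice that the automaton produced must be synchronizing for \emph{every} input formula: otherwise, ignoring the length bound, the reduction would also reduce \noun{3-Sat} to the bare synchronizability problem, which is in P by Theorem~\ref{lem:polAlg}. Thus the hardness has to come entirely from the \emph{length} of shortest reset words. Given $\varphi$ with variables $x_{1},\dots,x_{n}$ and clauses $C_{1},\dots,C_{m}$, I would build a synchronizing automaton $A_{\varphi}$ over $I=\left\{0,1\right\}$ in which a candidate short reset word is meant to be read as an $n$-letter block $w_{1}\cdots w_{n}$ --- the assignment $x_{i}\mapsto w_{i}$ --- possibly followed by a short fixed cleanup suffix. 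A counter component (a path $p_{0}\to p_{1}\to\cdots$ on which both letters advance by one and which ends in a global sink) keeps track of the current time step and determines where synchronization takes place; for each clause $C_{j}$ a scanner component checks, while the assignment block is read, whether some literal of $C_{j}$ is set to true, collapsing to a single state if it is and otherwise reaching a configuration whose two distinguished states can be reunited only along a detour longer than the threshold. With the threshold $k$ set to the length of the assignment block (plus the cleanup suffix, if used), the automaton $A_{\varphi}$ then has a reset word of length at most $k$ precisely when $\varphi$ is satisfiable, while $A_{\varphi}$ itself stays synchronizing in all cases; clearly $A_{\varphi}$ and $k$ are computable in polynomial time and $A_{\varphi}$ uses only two letters.

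The technical heart of the proof --- and the step I expect to be the main obstacle --- is wiring the counter and scanner components so that both directions hold exactly. The easy direction requires that, once a satisfying assignment has been committed, \emph{every} state of $A_{\varphi}$, including all auxiliary states of the counter and of all scanners, is driven to one common state within the budget; the delicate point is to keep those auxiliary states from getting ``stranded'' in the penalty detours. The hard direction requires showing that when $\varphi$ is unsatisfiable \emph{no} word of length at most $k$ is a reset word --- one must argue that any short reset word is forced into the assignment-plus-cleanup format, so that it genuinely encodes an assignment satisfying every clause, and cannot instead abuse the counter or the detours to synchronize by some unintended shortcut. A minor additional concern is keeping the alphabet binary throughout; had one started from a hardness result over a larger alphabet, one would also need an alphabet-reduction gadget (a binary decoding tree) together with a re-accounting of the length bound, which is why reducing directly from \noun{3-Sat} into a binary automaton is the cleaner route.
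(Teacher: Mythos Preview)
The paper does not prove this theorem: it is stated with a citation to \cite{EPP1} (Eppstein) and used as background, with no argument given in the paper itself. So there is no ``paper's own proof'' to compare your proposal against.

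As for your sketch on its own terms: the NP-membership paragraph is fine and is the standard argument. The hardness outline is in the spirit of Eppstein's original reduction (an assignment is encoded as a binary word, per-clause gadgets check satisfaction, a counter/sink enforces the length budget), and your remark that the reduction must output a synchronizing automaton for every formula is well taken. However, what you have written is a plan, not a proof: you explicitly defer the ``technical heart'', namely the concrete gadget design and the two-direction correctness argument, and you do not specify the states, transitions, or the exact value of $k$. In particular, the hard direction --- that no word of length at most $k$ synchronizes when $\varphi$ is unsatisfiable --- needs a genuine argument that any short reset word must have the assignment-block shape; this is exactly where careless constructions leak, and your proposal only names the issue without resolving it. If you want a self-contained proof you should either reproduce Eppstein's construction and verify both directions, or cite \cite{EPP1} as the paper does.
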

The results of this paper, as well as the former results of Fernau,
Heggernes, and Villanger \cite{FER1} and of the second author and
Drewienkowski \cite{ROM8,ROM8conf} are summarized by Tables \ref{tab:Complexities-of-SYN},
\ref{tab:Complexities-of-SRCP}, \ref{tab:Complexities-of-SRCPW}.
We have filled all the gaps in the first two tables (cf. corresponding
tables in \cite[Sec. 3]{FER1} and \cite[Sec. 6]{ROM8}), so the multivariate
analysis of \noun{SYN }and \noun{SRCP }is complete in the sense that
NP-complete restrictions are identified and under several standard
assumptions we know which restrictions are FPT and which of them have
polynomial kernels.

\section{Parameterized Complexity of SYN}

The following lemma, which is easy to prove using the construction
of a power automaton, says that \noun{Syn} lies in FPT\emph{ }if parameterized
by number of states:
\begin{lemma}
[\cite{FER1,SAN1}]\label{lem:FPT}There exists an algorithm for deciding
about \noun{SYN} in time $r\!\left(t,\left|I\right|\right)\cdot2^{t}$
for an appropriate polynomial $r$.
\end{lemma}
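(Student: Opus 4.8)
The plan is to construct the standard power automaton (subset automaton) and run a breadth-first search on it. Given an instance $A=\left(Q,I,\delta\right)$ with $k\in\mathbb{N}$, consider the directed graph on the vertex set $2^{Q}$ whose arcs are $S\overset{x}{\longrightarrow}\delta\!\left(S,x\right)$ for each $S\subseteq Q$ and each $x\in I$. A word $w$ of length at most $k$ with $\left|\delta\!\left(Q,w\right)\right|=1$ exists if and only if some singleton $\{q\}$ is reachable from the start vertex $Q$ by a directed path of length at most $k$ in this graph. The key quantitative point is that the graph has exactly $2^{t}$ vertices and $\left|I\right|\cdot2^{t}$ arcs, so it can be built explicitly; computing each arc $\delta\!\left(S,x\right)$ costs time polynomial in $t$ and $\left|I\right|$ (it is just $t$ lookups in the table of $\delta$).

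First I would treat the easy case where synchronization cannot help, or where it is forced: if $k\ge z\!\left(t\right)=\tfrac{t^{3}-t}{6}$, then by Theorem~\ref{thm:z(t)} and Theorem~\ref{lem:polAlg} (equivalently, Corollary~\ref{cor:polAlgSYN}) the answer is ``yes'' iff $A$ is synchronizing, which is decidable in time polynomial in the instance size, hence certainly within the claimed bound. So I may assume $k<z\!\left(t\right)$, and in particular $k$ is bounded by a polynomial in $t$. Now I would run a BFS from $Q$ in the power automaton, halting as soon as a singleton is dequeued or once all vertices within distance $k$ have been explored. The BFS visits at most $2^{t}$ vertices and relaxes at most $\left|I\right|\cdot2^{t}$ arcs, and each arc relaxation costs $\mathrm{poly}(t)$ time to evaluate $\delta\!\left(S,x\right)$ and $\mathrm{poly}(t)$ time for bookkeeping (representing subsets as bitvectors of length $t$). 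The BFS answers ``yes'' precisely when a singleton lies at BFS-distance at most $k$ from $Q$, which by the correspondence above is exactly the \noun{SYN} condition. Collecting the costs, the whole procedure runs in time $r\!\left(t,\left|I\right|\right)\cdot2^{t}$ for a suitable polynomial $r$ absorbing the per-vertex and per-arc overhead together with the polynomial reading/writing of the input.

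There is essentially no hard step here; the only things to be careful about are bookkeeping details. One must make sure the input itself (including $k$, which could a priori be written in binary and be astronomically large) is handled: the case split at $z\!\left(t\right)$ exactly takes care of this, since a $k$ that large is reduced to the polynomial-time synchronization test, and otherwise $k$ is polynomially bounded. One must also be slightly careful that ``time polynomial in the instance size'' for the synchronization test of Theorem~\ref{lem:polAlg} is in particular of the form $r\!\left(t,\left|I\right|\right)\cdot2^{t}$ once we note the instance size is at least $t$ and the test is genuinely polynomial. With these remarks the bound $r\!\left(t,\left|I\right|\right)\cdot2^{t}$ follows, establishing Lemma~\ref{lem:FPT} and hence that \noun{SYN} parameterized by $t=\left|Q\right|$ lies in FPT. \dispqed
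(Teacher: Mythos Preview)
Your proposal is correct and follows exactly the approach the paper indicates: the paper does not spell out a proof but merely remarks that the lemma ``is easy to prove using the construction of a power automaton,'' citing \cite{FER1,SAN1}. Your BFS on the subset automaton is precisely this construction, and your additional case split at $k\ge z\!\left(t\right)$ is a harmless (if somewhat over-cautious) way of ensuring the encoding of $k$ does not spoil the bound; one could equally well just run the full BFS, record the minimum distance from $Q$ to a singleton, and compare it to $k$.
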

But does \noun{Syn} have a polynomial kernel? In this section we use
methods developed by Bodlaender et~al. \cite{BOD1} to prove the
following:
\begin{theorem}
If \noun{Syn} has a polynomial kernel, then $\mathrm{PH}=\Sigma_{\mathrm{p}}^{3}$.
\end{theorem}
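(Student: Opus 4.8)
The plan is to rule out a polynomial kernel for \noun{Syn} (parameterized by the number of states $t$) via the standard machinery of \emph{OR-composition} / \emph{cross-composition} of Bodlaender et~al. \cite{BOD1}: a polynomial-time algorithm that receives many instances of an NP-hard problem, all of roughly the same size, and produces a single instance of \noun{Syn} whose parameter $t$ is polynomially bounded in the size of one input instance (with only a polylogarithmic, or even polynomial, dependence on the \emph{number} of instances), and which is a YES-instance iff at least one of the inputs is. By the framework this implies that \noun{Syn} has no polynomial kernel unless $\mathrm{NP}\subseteq\mathrm{coNP}/\mathrm{poly}$, equivalently unless $\mathrm{PH}$ collapses to the claimed level $\Sigma_{\mathrm{p}}^{3}$ (one should be a little careful about whether one gets $\mathrm{PH}=\Sigma_{\mathrm{p}}^{3}$ versus the weaker collapse; the sharper level follows from the fact that the composition below only uses $\mathcal{O}(\log)$-many bits to select the active instance, i.e.\ a \emph{weak} OR-composition, or from the cross-composition refinement — this is the point I would want to double-check against \cite{BOD1}).

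The concrete source problem I would compose is \noun{Syn} with a binary alphabet, which is NP-complete by \cite{EPP1}. So suppose we are given automata $A_1,\dots,A_m$, all over $I=\{a,b\}$, each with the same number $n$ of states and the same length bound $k$ (these normalizations are routine: pad with a sink-like gadget, and one may assume $k\le z(n)=\frac{n^3-n}{6}$ by Corollary~\ref{cor:polAlgSYN}, so $k$ is polynomially bounded in $n$). First I would enlarge the alphabet slightly, to $I'=\{a,b,c\}$ (or a few more letters), using the extra letters purely for "routing". The target automaton $B$ has state set roughly $Q_1\,\dot\cup\,\dots\,\dot\cup\,Q_m$ plus $\mathcal{O}(\log m)$ (or $\mathcal{O}(n)$) auxiliary "selector" states, so that $t=|Q(B)|$ is $\mathcal{O}(n\cdot\operatorname{poly}\log m)$ — crucially, the many copies share very little and the parameter stays small relative to $m$. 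Wait: merging all the $Q_i$ additively would make $t=\Theta(mn)$, which is useless. So the real idea must be the opposite — the copies have to be \emph{collapsed}, not juxtaposed.

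Here is the collapse I would actually pursue. Because every $A_i$ has $n$ states, identify $Q_i$ with $\{1,\dots,n\}$ for all $i$, so that the target automaton has just $n$ "working" states (plus a few auxiliary ones). Now introduce a short "index-writing" phase: the new letters are used to read off, into the configuration, which index $j\in\{1,\dots,m\}$ we wish to test; since $\lceil\log_2 m\rceil$ bits suffice and $k$ is already $\operatorname{poly}(n)$, we can afford to encode the index as a block of $\mathcal{O}(\log m)$ letters and then "apply $A_j$'s transitions" to the working states. The mechanism for making a single $n$-state automaton behave like $A_j$ depends on $j$: this is exactly where one normally uses the trick of letting the letters of the composed automaton be pairs (original letter, guessed bit) and having the automaton \emph{verify} consistency of the guessed bits using the auxiliary states, so that a reset word exists iff there is a choice of $j$ making $A_j$ synchronize within the budget. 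The parameter is then $t=n+\mathcal{O}(\log m)$ (or $n+\mathcal{O}(1)$ with a cleverer encoding), the instance is built in polynomial time, and $B$ synchronizes within the new budget $k'=k+\mathcal{O}(\log m)$ iff some $A_j$ synchronizes within $k$ — an OR-composition with polynomially bounded parameter, as required.

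The main obstacle, and the part that needs genuine care rather than bookkeeping, is getting the "simulate $A_j$ with an $n$-state automaton" gadget to be \emph{faithful in both directions}: the construction must ensure that a reset word for $B$ cannot "cheat" by mixing transitions of different $A_j$'s, by only partially synchronizing, or by exploiting the selector/verifier states as a shortcut — while simultaneously keeping $t$ small and keeping $k'$ polynomial. I expect to handle this by a careful product/layered design where the auxiliary states form a small "control" automaton whose own synchronization is forced to happen first and forces a consistent global choice of $j$, combined with the standard observation that any superword of a reset word is a reset word (so the routing prefix does no harm). Once the gadget is pinned down, verifying the forward and backward implications is a direct case analysis on the phases of a candidate reset word, and the complexity-theoretic conclusion $\mathrm{PH}=\Sigma_{\mathrm{p}}^{3}$ then follows by quoting the relevant theorem of \cite{BOD1} about (cross-)composition and polynomial kernels.
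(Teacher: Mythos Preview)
Your high-level strategy is right and matches the paper: invoke the OR-composition framework of \cite{BOD1}, and collapse all the state sets $Q_i$ onto a single common copy $\{1,\dots,n\}$ so that the output parameter stays polynomial in $n$. The paper also handles the case $m\ge 2^{t}$ separately by simply running the FPT algorithm of Lemma~\ref{lem:FPT} on every input instance (this is polynomial in the total input size because then $2^{t}\le m$), and you should do the same; it is what guarantees $\log m\le t$ in the interesting case and hence keeps the output parameter polynomial in $t$.

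There is, however, a genuine gap at exactly the point you flag as ``the main obstacle''. You propose letters of the form (original letter, guessed bit) together with $\mathcal{O}(\log m)$ auxiliary states, with the index $j$ ``written into the configuration'' by a prefix. But in a DFA the transition $\delta_B(s,x)$ from a working state $s$ under a letter $x$ is fixed once and for all; it cannot depend on which other states happen to be active, nor on a value of $j$ recorded earlier in the run. With a constant-size alphabet and only $n+\mathcal{O}(\log m)$ states there is simply no place to store $j$ so that the next step on the working copy performs $\delta_j(s,a)$ rather than $\delta_{j'}(s,a)$; your bit-pair letters reveal only one bit of $j$ per step, which is not enough to determine that step's transition. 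As sketched, the construction cannot prevent a reset word from mixing transitions of different $A_j$, which is precisely the failure mode you were worried about.

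The paper resolves this by exploiting that the parameter is $t$ alone, not $|I|$: it lets the output alphabet be huge. The composed automaton takes as letters the disjoint union $\bigcup_i I_i$ together with $m$ selector letters $\alpha_1,\dots,\alpha_m$ and $t$ finishing letters $\omega_1,\dots,\omega_t$; on the $t$ working states a letter from $I_i$ simply acts as in $A_i$, so the letter itself carries the full index. The nontrivial gadget is a \emph{guard table} of $2\,(z(t)+1)(\lfloor\log(m{+}1)\rfloor+1)$ extra states (polynomial in $t$ since $\log m\le t$): after applying $\alpha_i$, the active cells in the table form the binary encoding of $i$, and the table is designed so that applying any letter from $I_{i'}$ with $i'\neq i$ sends some active guard state back to row~$0$, making synchronization within the length budget $d'=z(t)+1$ impossible. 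That enforced consistency---via a large alphabet and $\Theta(z(t)\cdot\log m)$ extra states rather than $\mathcal{O}(\log m)$---is the missing ingredient in your plan.
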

By $\mathrm{PH}$ we denote the union of the entire polynomial hierarchy,
so $\mathrm{PH}=\Sigma_{p}^{3}$ means that polynomial hierarchy collapses
into the third level, which is widely assumed to be false. The key
proof method relies on \emph{composition algorithms}. In order to
use them immediately, we introduce the formalization of our parameterized
problem as a set of string-integer pairs:
\[
L_{\mathrm{SYN}}=\left\{ \left(x,t\right)\mid x\in\Sigma^{\star}\mbox{ encodes an instance of SYN with }t\in\mathbb{N}\mbox{ states}\right\} ,
\]
where $\Sigma$ is an appropriate finite alphabet.

\subsection{Composition Algorithms}

\label{comp}A \emph{composition algorithm }for a parameterized problem
$L\subseteq\Sigma^{\star}\times\mathbb{N}$ is an algorithm that
\begin{itemize}
\item receives as input a sequence $\left(\left(x_{1},t\right),\dots,\left(x_{m},t\right)\right)$
with $\left(x_{i},t\right)\in\Sigma^{\star}\times\mathbb{N}^{+}$
for each $1\le i\le m$,
\item uses time polynomial in $\sum_{i=1}^{m}\left|x_{i}\right|+t$
\item outputs $\left(y,t'\right)\subseteq\Sigma^{\star}\times\mathbb{N}^{+}$
with 

\begin{enumerate}
\item $\left(y,t'\right)\in L\Leftrightarrow\mbox{there is some }\ensuremath{1\le i\le m}\mbox{ with }\left(x_{i},t\right)\in L$,
\item $t'$ is polynomial in $t$.
\end{enumerate}
\end{itemize}

Let $L\subseteq\Sigma^{\star}\times\mathbb{N}$ be a parameterized
problem. Its \emph{unparameterized version }is 
\[
\widehat{L}=\left\{ x\#a^{t}\mid\left(x,t\right)\in L\right\} .
\]

\begin{theorem}
[\cite{BOD1}]\label{thm:PH}Let $L$ be a parameterized problem having
a composition algorithm. Assume that its unparameterized version $\widehat{L}$
is $\mathrm{NP}$-complete. If $L$ has a polynomial kernel, then
$\mathrm{PH}=\Sigma_{p}^{3}$.
\end{theorem}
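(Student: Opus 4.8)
The plan is to derive an \emph{OR-distillation} for the NP-complete language $\widehat{L}$ by combining the given composition algorithm with the assumed polynomial kernel, and then to invoke the distillation lower bound of Fortnow and Santhanam: an NP-hard language admitting an OR-distillation forces $\mathrm{coNP}\subseteq\mathrm{NP}/\mathrm{poly}$, which by Yap's theorem collapses the polynomial hierarchy to $\mathrm{PH}=\Sigma_p^3$. Recall that an OR-distillation for a language $K\subseteq\Sigma^{\star}$ maps any sequence $z_1,\dots,z_m$ to a single string $y$, in time polynomial in $\sum_i\left|z_i\right|$, so that $y\in K$ if and only if some $z_i\in K$, and crucially $\left|y\right|$ is bounded by a polynomial in $\max_i\left|z_i\right|$ alone, in particular independent of $m$. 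So the whole task reduces to building such a distillation for $\widehat{L}$ out of the two available subroutines.

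First I would set $n=\max_i\left|z_i\right|$ and discard any malformed input, since a string not of the form $x\#a^t$ is simply not in $\widehat{L}$. For a well-formed instance $z_i=x_i\#a^{t_i}$ the parameter is encoded in unary, so $t_i\le\left|z_i\right|\le n$; hence only the values $t\in\{1,\dots,n\}$ can occur and there are at most $n$ distinct parameter values. I would partition the inputs into classes sharing a common parameter value $t$ and apply the composition algorithm to each class. Since composition requires a common parameter, this partition is exactly what makes it applicable; it collapses each class to one parameterized instance $(y_t,t')$ with $t'$ polynomial in $t\le n$ and with $(y_t,t')\in L$ if and only if some member of the class lies in $L$. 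Applying the polynomial kernel to each $(y_t,t')$ then shrinks it to an instance $K_t=(\widehat{y}_t,\widehat{t}_t)$ whose total length is bounded by a polynomial in $t'$, hence by a polynomial in $n$, while preserving membership in $L$.

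Next I would combine the at most $n$ surviving kernelized instances $K_1,\dots,K_n$. Each has length polynomial in $n$, so their concatenation has length at most $n\cdot\mathrm{poly}(n)=\mathrm{poly}(n)$, still independent of $m$. The disjunction ``some $K_t$ lies in $L$'' is an instance of the OR-language of $\widehat{L}$, which lies in $\mathrm{NP}$; since $\widehat{L}$ is NP-complete, a fixed polynomial-time Karp reduction sends this OR-instance back to a single instance $y$ of $\widehat{L}$, and because the input to the reduction already has size $\mathrm{poly}(n)$, so does $y$. Chaining the equivalences---the kernel preserves $L$-membership, composition links $K_t\in L$ to membership of some class member, and $x_i\#a^{t_i}\in\widehat{L}$ iff $(x_i,t_i)\in L$---gives $y\in\widehat{L}$ if and only if some $z_i\in\widehat{L}$. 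All stages run in time polynomial in $\sum_i\left|z_i\right|+n$, so $y$ is produced in the required time. This $y$ is the desired OR-distillation, and the cited lower bounds finish the argument.

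The step I expect to be the main obstacle, and the one that deserves the most care, is keeping every intermediate object polynomial in $n$ rather than in $m$. The entire construction hinges on the unary padding of the parameter in $\widehat{L}$, which caps the number of parameter classes at $n$; without that bound the final concatenation could grow with $m$ and the output would no longer qualify as a distillation. One must also verify that the two quantitative guarantees interlock correctly: the composition algorithm only promises $t'$ polynomial in $t$, not in the class size, and the kernel only promises length polynomial in the parameter, and it is precisely the composition of these two polynomials, evaluated at $t\le n$, that yields the per-class $\mathrm{poly}(n)$ bound. Checking the running time of composition summed over all classes, each with $t\le n$ and at most $n$ classes, and confirming that the NP-completeness reduction does not reintroduce an $m$-dependence, are the remaining points to discharge.
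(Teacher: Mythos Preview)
The paper does not prove this theorem at all: it is quoted verbatim from Bodlaender et al.\ \cite{BOD1} and used as a black box, so there is no ``paper's own proof'' to compare against. Your sketch is essentially the original Bodlaender--Downey--Fellows--Hermelin argument (composition followed by kernelization on each of the at most $n$ parameter classes, then an OR-reduction back to $\widehat{L}$ via NP-completeness, yielding an OR-distillation to which the Fortnow--Santhanam lower bound and Yap's theorem apply), and the crucial size-tracking points you flag---unary padding bounding the number of classes, and the two polynomial guarantees composing to a $\mathrm{poly}(n)$ bound independent of $m$---are exactly the ones that make the proof work.
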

The unparameterized version of $L_{\mathrm{SYN}}$ is computationally
as hard as the classical SAT\noun{, }so it is NP-complete. It remains
only to describe a composition algorithm for $L_{\mathrm{SYN}}$,
which is done in the remainder of this section.

\subsection{Preprocessing}

Let the composition algorithm receive an input 
\[
\left(\left(A_{1},d_{1}\right),t\right),\dots,\left(\left(A_{m},d_{m}\right),t\right)
\]
consisting of $t$-state automata $A_{1},\dots,A_{m}$, each of them
equipped with a number $d_{i}$. Assume that the following easy procedures
have been already applied:
\begin{itemize}
\item For each $i=1,\dots,m$ such that $d_{i}\ge z\!\left(t\right)$, use
the polynomial-time synchronizability algorithm from Corollary \pageref{cor:polAlgSYN}
to decide whether $\left(\left(A_{i},d_{i}\right),t\right)\in L_{\mathrm{SYN}}$.
If so, return a trivial true instance immediately. Otherwise just
delete the $i$-th member from the sequence.
\item For each $i=1,\dots,m$, add an additional letter $\kappa$ to the
automaton $A_{i}$ such that $\kappa$ acts as the identical mapping:
$\delta_{i}\!(s,\kappa)=s$.
\item For each $i=1,\dots,m$ rename the states and letters of $A_{i}$
such that 
\begin{eqnarray*}
A_{i} & = & \left(Q_{i},I_{i},\delta_{i}\right)\\
Q_{i} & = & \left\{ 1,\dots,t\right\} \\
I_{i} & = & \left\{ \kappa,a_{i,1},\dots,a_{i,\left|I_{i}\right|-1}\right\} .
\end{eqnarray*}

\end{itemize}
After that, our algorithm chooses one of the following procedures
according to the length $m$ of the input sequence:
\begin{itemize}
\item If $m\ge2^{t}$, use the exponential-time algorithm from Lemma \ref{lem:FPT}:
Denote $D=\sum_{i=1}^{m}\left|\left(A_{i},d_{i}\right)\right|+t$,
where we add lengths of descriptions of the pairs. Note that $D\ge m\ge2^{t}$
and that $D$ is the quantity used to restrict the running time of
composition algorithms. By the lemma, in time
\[
\sum_{i=1}^{m}r\left(t,\left|I_{i}\right|\right)\cdot2^{t}\le m\cdot r\!\left(D,D\right)\cdot2^{t}\le D^{2}\cdot r\!\left(D,D\right)
\]
we are able to analyze all the $m$ automata and decide if some of
them have a reset word of the corresponding length. It remains just
to output some appropriate trivial instance $\left(\left(A',d'\right),t'\right)$.
\item If $m<2^{t}$, we denote $q\!\left(m\right)=\left\lfloor \log\left(m+1\right)\right\rfloor $.
It follows that $q\!\left(m\right)\le t+2$. On the output of the
composition algorithm we put $\left(\left(A',d'\right),t'\right)$,
where $A'$ is the automaton described in the following paragraphs
and 
\[
d'=z\!\left(t\right)+1
\]
is our choice of the maximal length of reset words to be found for
$A'$.
\end{itemize}

\subsection{\label{sub:Construction-of A'}Construction of $A'$ and Its Ideas}

Here we describe the automaton $A'$ that appears in the output of
our composition algorithm. We set
\begin{eqnarray*}
A' & = & \left(Q',I',\delta'\right),\\
Q' & = & \left\{ 1,\dots,t\right\} \cup\left\{ \mathrm{D}\right\} \cup\left(\left\{ 0,\dots,z\!\left(t\right)\right\} \times\left\{ 0,\dots,q\!\left(m\right)\right\} \times\left\{ \mathrm{T},\mathrm{F}\right\} \right),\\
I' & = & \left(\bigcup_{i=1}^{m}I_{i}\right)\cup\left\{ \alpha_{1},\dots,\alpha_{m}\right\} \cup\left\{ \omega_{1},\dots,\omega_{t}\right\} .
\end{eqnarray*}
On the states $\left\{ 1,\dots,t\right\} $ the letters from $\bigcup_{i=1}^{m}I_{i}$
act simply:
\[
\begin{aligned}s\overset{x_{i,j}}{\longrightarrow} & \:\delta_{i}\!\left(s,x_{i,j}\right)\end{aligned}
\]
for each $s\in1,\dots,t$, $i=1,\dots,m$, $j=1,\dots,\left|I_{i}\right|$.
In other words, we let all the letters from all the automata $A_{1},\dots,A_{m}$
act on the states $1,\dots,t$ just as they did in the original automata.
The additional letters act on $\left\{ 1,\dots,t\right\} $ simply
as well:
\[
\begin{aligned}s\overset{\alpha_{i}}{\longrightarrow} & \: s\end{aligned}
\qquad\begin{aligned}s\overset{\omega_{\overline{s}}}{\longrightarrow} & \begin{cases}
\mathrm{D} & \mbox{if }\overline{s}=s\\
s & \mbox{otherwise.}
\end{cases}\end{aligned}
\]
for each $s,\overline{s}\in1,\dots,t$, $i=1,\dots,m$. The state
$\mathrm{D}$ is \emph{absorbing}, which means that
\[
\begin{aligned}\mathrm{D}\overset{y}{\longrightarrow} & \:\mathrm{D}\end{aligned}
\]
for any $y\in I'$. Note that any reset word of $A'$ have to map
all the states of $Q'$ to $\mathrm{D}$. 

The remaining $2\cdot\left(z\!\left(t\right)+1\right)\cdot\left(q\!\left(m\right)+1\right)$
states form what we call a\emph{ guard table}.\emph{ }Its purpose
is to guarantee that:
\begin{enumerate}
\item [(C1)]\label{enu:c123}Any reset word of $A'$ have to be of length
at least $d'=z\!\left(t\right)+1$.
\item [(C2)]Any reset word $w$ of $A'$, having length exactly $z\!\left(t\right)+1$,
is of the form
\begin{equation}
w=\alpha_{i}y_{1}\dots y_{d_{i}}\kappa^{z\!\left(t\right)-1-d_{i}}\omega_{s}\label{eq:form}
\end{equation}
for some $i\in\left\{ 1,\dots,m\right\} $, $y_{1},\dots,y_{d_{i}}\in I_{i}$,
and $s\in\left\{ 1,\dots,t\right\} $, such that $y_{1}\dots y_{d_{i}}$
is a reset word of $A_{i}$. 
\item [(C3)]Any word $w$ 

\begin{itemize}
\item of length $d'=z\!\left(t\right)+1$,
\item of the form (\ref{eq:form}),
\item and satisfying $\delta_{i}\!\left(Q_{i},y_{1}\dots y_{d_{i}}\right)=\left\{ s\right\} $ 
\end{itemize}

is a reset word of $A'$.

\end{enumerate}
If the guard table manages to guarantee these three properties of
$A'$, we are done: Is is easy to check that they imply all the conditions
given in Definition \ref{comp}. So, let us define the action of the
letters from $I'$ on the states from $\left\{ 0,\dots,z\!\left(t\right)\right\} \times\left\{ 0,\dots,q\!\left(m\right)\right\} \times\left\{ \mathrm{T},\mathrm{F}\right\} $.
After that the automaton $A'$ will be complete and we will check
the properties (C1,C2,C3).
\begin{figure}
\begin{centering}
\includegraphics{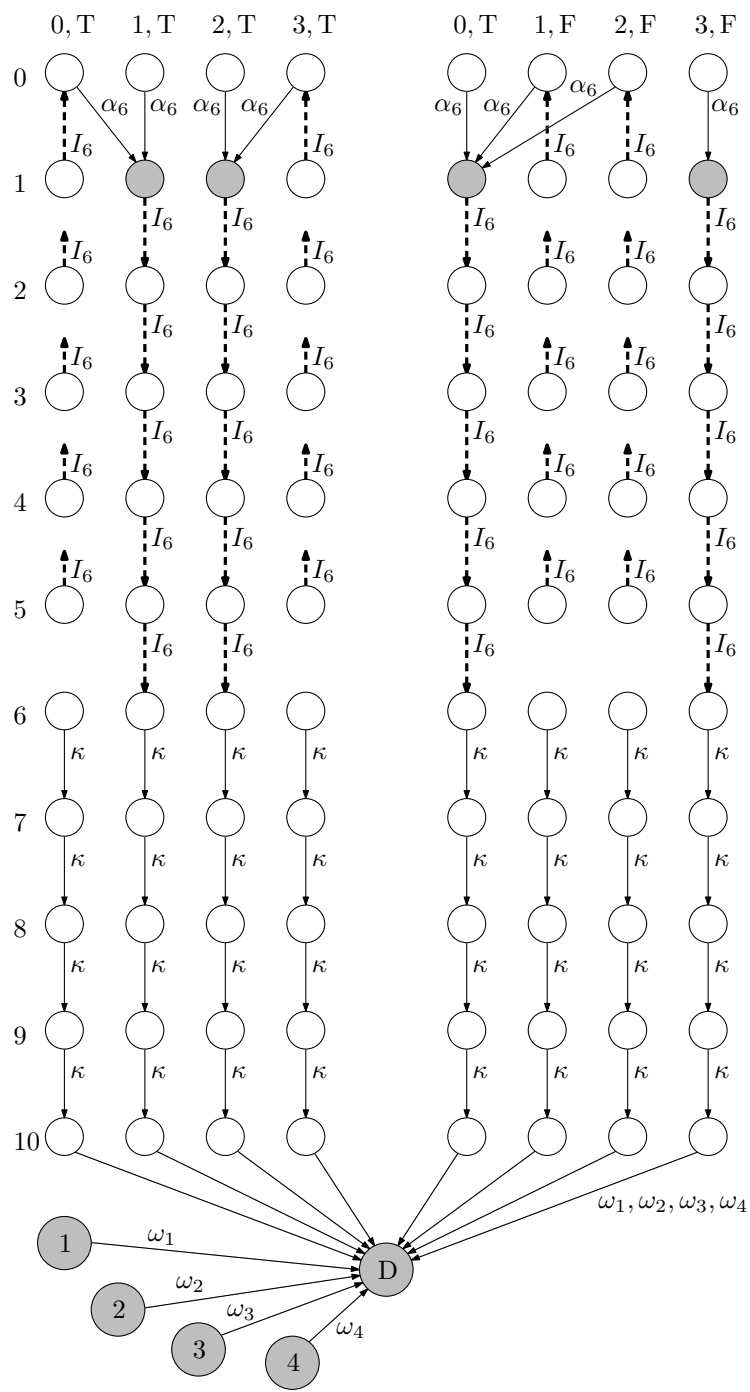}
\par\end{centering}

\caption{\label{fig:Some-transitions-of}Some transitions of the example automaton
described on Page \pageref{Example:}. Grey states remain active after
applying $\alpha_{6}$.}
\end{figure}

The actions of the letters $\alpha_{1},\dots,\alpha_{m}$ should meet
the following two conditions:
\begin{itemize}
\item Any reset word $w$ of length $z\!\left(t\right)+1$ have to start
by some $\alpha_{i}$.
\item In such short reset word, right after the starting $\alpha_{i}$,
there must occur at least $z\!\left(t\right)-1$ consecutive letters
from $I_{i}$. Informally, by applying $\alpha_{i}$ we \emph{choose}
the automaton $A_{i}$.
\end{itemize}
How to do that? The number $m$ may be quite large and each of $\alpha_{1},\dots,\alpha_{m}$
needs to have a unique effect. The key tool is what we call \emph{activity
patterns}. Let us work with the set
\[
R=\left\{ 0,\dots,q\!\left(m\right)\right\} ,
\]
which matches ,,half of a row'' of the guard table. Subsets of $R$
correspond in a canonical way to binary representations of numbers
$0,\dots,2^{q\left(m\right)+1}-1$. We will actually represent only
the numbers $1,\dots,m$. These does not include any of the extreme
values corresponding to the empty set and whole $R$, because we have
$m<2^{q\left(m\right)+1}-1$. So let the mapping
\[
\mathfrak{b}:\left\{ 1,\dots,m\right\} \rightarrow2^{R}
\]
 assign the corresponding subset of $R$ to a number. For instance,
it holds that
\[
\mathfrak{b}\!\left(11\right)=\left\{ 0,1,3\right\} 
\]
 because $11=2^{0}+2^{1}+2^{3}$. For each $i=1,\dots,m$ we define
specific \emph{pattern functions}
\[
\pi_{i}^{\mathrm{T}},\pi_{i}^{\mathrm{F}}:R\rightarrow R
\]
such that
\begin{eqnarray*}
\rng\pi_{i}^{\mathrm{T}} & = & \mathfrak{b}\!\left(i\right),\\
\rng\pi_{i}^{\mathrm{F}} & = & R\backslash\mathfrak{b}\!\left(i\right)
\end{eqnarray*}
for each $i$. It is irrelevant how exactly are $\pi_{i}^{\mathrm{T}}$
and $\pi_{i}^{\mathrm{F}}$ defined. It is sure that they exist, because
the range is never expected to be empty. Now the action of the letters
$\alpha_{1},\dots,\alpha_{m}$ is as follows:
\[
\begin{aligned}\left(h,k,\mathrm{T}\right)\overset{\alpha_{i}}{\longrightarrow} & \:\left(1,\pi_{i}^{\mathrm{T}}\!\left(k\right),\mathrm{T}\right),\\
\left(h,k,\mathrm{F}\right)\overset{\alpha_{i}}{\longrightarrow} & \:\left(1,\pi_{i}^{\mathrm{F}}\!\left(k\right),\mathrm{F}\right).
\end{aligned}
\]
for each $s\in\left\{ 1,\dots,t\right\} $ and each reasonable $h,k$. 

Note that each $\alpha_{i}$ maps the entire guard table, and in particular
the entire row $0$, into the row $1$. In fact, all ,,downward''
transitions within the guard table will lead only one row down. And
the only transitions escaping from the guard table will lead from
the bottom row. Thus any reset word will have length at least $d'=z\!\left(t\right)+1$.
Moreover, during its application, at time $l$ the rows $0,\dots,l-1$
will have to be all inactive. This is a key mechanism that the guard
table uses for enforcing necessary properties of short reset words.

Let us define how the letters $x_{i,j}$ act on the guard table. Choose
any $i\in\left\{ 1,\dots,m\right\} $. The action of $x_{i,j}$ within
the guard table does not depend on $j$, all the letters coming from
a single automaton act identically here:
\begin{itemize}
\item for the rows $h\in\left\{ 1,\dots,d_{i}\right\} $ we set
\end{itemize}
\[
\begin{aligned}\left(h,k,\mathrm{T}\right)\overset{x_{i,j}}{\longrightarrow} & \begin{cases}
\left(h+1,k,\mathrm{T}\right) & \mbox{if }k\in\mathfrak{b}\!\left(i\right)\\
\left(0,k,\mathrm{T}\right) & \mbox{otherwise}
\end{cases}\\
\left(h,k,\mathrm{F}\right)\overset{x_{i,j}}{\longrightarrow} & \begin{cases}
\left(h+1,k,\mathrm{F}\right) & \mbox{if }k\notin\mathfrak{b}\!\left(i\right)\\
\left(0,k,\mathrm{T}\right) & \mbox{otherwise}
\end{cases}
\end{aligned}
\]

\begin{itemize}
\item and for the rows $h\in\left\{ 0\right\} \cup\left\{ d_{i}+1,\dots,z\!\left(t\right)\right\} $
we set
\[
\begin{aligned}\left(h,k,\mathrm{T}\right)\overset{x_{i,j}}{\longrightarrow} & \:\left(0,k,\mathrm{T}\right),\\
\left(h,k,\mathrm{F}\right)\overset{x_{i,j}}{\longrightarrow} & \:\left(0,k,\mathrm{F}\right).
\end{aligned}
\]

\end{itemize}
Recall that sending an activity marker along any transition ending
in the row $0$ is a ,,suicide''. A word that does this cannot be
a short reset word. So, if we restrict ourselves to letters from some
$I_{i}$, the transitions defined above imply that that only at times
$1,\dots,d_{i}$ the forthcoming letter can be some $x_{i,j}$. In
the following $z\!\left(t\right)-d_{i}-1$ steps the only letter from
$I_{i}$ that can be applied is $\kappa$. 

The letter $\kappa$ maps all the states of the guard table simply
one state down, except for the rows $0$ and $z\!\left(t\right)$.
Set
\[
\begin{aligned}\left(h,k,\mathrm{T}\right)\overset{\kappa}{\longrightarrow} & \:\left(h+1,k,\mathrm{T}\right),\\
\left(h,k,\mathrm{F}\right)\overset{\kappa}{\longrightarrow} & \:\left(h+1,k,\mathrm{F}\right).
\end{aligned}
\]
for each $h\in\left\{ 1,\dots,z\!\left(t\right)-1\right\} $, and
\[
\begin{aligned}\left(0,k,\mathrm{T}\right)\overset{\kappa}{\longrightarrow} & \:\left(0,k,\mathrm{T}\right),\\
\left(0,k,\mathrm{F}\right)\overset{\kappa}{\longrightarrow} & \:\left(0,k,\mathrm{F}\right),\\
\left(z\!\left(t\right),k,\mathrm{T}\right)\overset{\kappa}{\longrightarrow} & \:\left(0,k,\mathrm{T}\right),\\
\left(z\!\left(t\right),k,\mathrm{F}\right)\overset{\kappa}{\longrightarrow} & \:\left(0,k,\mathrm{F}\right).
\end{aligned}
\]
It remains to describe actions of the letters $\omega_{1},\dots,\omega_{t}$
on the guard table. Set
\[
\begin{aligned}\left(z\!\left(t\right),k,\mathrm{T}\right)\overset{\omega}{\longrightarrow} & \:\mathrm{D}\\
\left(z\!\left(t\right),k,\mathrm{F}\right)\overset{\omega}{\longrightarrow} & \:\mathrm{D}
\end{aligned}
\]
for each $k$, and
\[
\begin{aligned}\left(h,k,\mathrm{T}\right)\overset{\omega}{\longrightarrow} & \:\left(0,k,\mathrm{T}\right)\\
\left(h,k,\mathrm{F}\right)\overset{\omega}{\longrightarrow} & \:\left(0,k,\mathrm{F}\right)
\end{aligned}
\]
for each $k$ in the remaining rows $h\in\left\{ 0,\dots,z\!\left(t\right)-1\right\} $.
Now the automaton $A'$ is complete.

\subsection{An Example}

\label{Example:}Consider an input consisting of $m=12$ automata
$A_{1},\dots,A_{12}$, each of them having $t=4$ states. Because
$z\!\left(4\right)=10$ and $q\!\left(12\right)=3$, the output automaton
$A'$ has $93$ states in total. In Figure \ref{fig:Some-transitions-of}
all the states are depicted, together with some of the transitions.
We focus on the transitions corresponding to the automaton $A_{6}$,
assuming that $d_{6}=5$. 

The action of $\alpha_{6}$ is determined by the fact that $6=2^{1}+2^{2}$
and thus 
\begin{eqnarray*}
\rng\pi_{6}^{\mathrm{T}} & = & \mathfrak{b}\!\left(6\right)=\left\{ 1,2\right\} ,\\
\rng\pi_{6}^{\mathrm{F}} & = & R\backslash\mathfrak{b}\!\left(6\right)=\left\{ 0,3\right\} .
\end{eqnarray*}
If the first letter of a reset word is $\alpha_{6}$, after its application
only the states
\[
\left(1,1,\mathrm{T}\right),\left(1,2,\mathrm{T}\right),\left(1,0,\mathrm{F}\right),\left(1,3,\mathrm{F}\right)
\]
remain active within the guard table. Now we need to move their activity
markers one row down in each of the following $z\!\left(t\right)-1=9$
steps. The only way to do this is to apply $d_{6}=5$ letters of $I_{6}$
and then $z\!\left(t\right)-1-d_{6}=4$ occurrences of $\kappa$.
Then we are allowed to apply one of the letters $\omega_{1},\dots,\omega_{t}$.
But before that time, there should remain only one active state $s\in\left\{ 1,\dots,t\right\} $,
so that we could use $\omega_{s}$. The letter $\kappa$ does not
affect the activity within $\left\{ 1,\dots,t\right\} $ so we need
to synchronize these states using $d_{6}=5$ letters from $I_{6}$.

So, any short reset word of $A'$ starting with $\alpha_{6}$ has
to contain a short reset word of $A_{6}$.

\subsection{The Guard Table Works}

It remains to use ideas informally outlined in Section \ref{sub:Construction-of A'}
to prove that $A'$ has the properties C1,C2, and C3 from Page \pageref{enu:c123}.
\begin{proof}
[C1]As it has been said, for each letter $x\in I'$ and each state
$\left(h,k,\mathrm{Q}\right)$, where $\mathrm{Q}\in\left\{ \mathrm{T},\mathrm{F}\right\} $
and $h\in\left\{ 0,\dots,z\!\left(t\right)-1\right\} $, it holds
that
\[
\begin{aligned}\left(h,k,\mathrm{Q}\right)\overset{x}{\longrightarrow} & \:\left(h',k',\mathrm{Q}\right),\end{aligned}
\]
where $h'<h$ or $h'=h+1$. So the shortest paths from the row $0$
to the state $\mathrm{D}$ have length at least $z\!\left(t\right)+1$.
\end{proof}

\begin{proof}
[C2]We should prove that any reset word $w$, having length exactly
$z\!\left(t\right)+1$, is of the form
\[
w=\alpha_{i}y_{1}\dots y_{d_{i}}\kappa^{z\!\left(t\right)-1-d_{i}}\omega_{s},
\]
such that, moreover, $y_{1}\dots y_{d_{i}}$ is a reset word of $A_{i}$.
The starting $\alpha_{i}$ is necessary, because $\alpha_{1},\dots,\alpha_{t}$
are the only letters that map states from the row $0$ to other rows.
Denote the remaining $z\!\left(t\right)$ letters of $w$ by $y_{1},\dots,y_{z\left(t\right)}$.

Once an $\alpha_{i}$ is applied, there remain only $\left|R\right|=q\!\left(m\right)+1$
active states in the guard table, all in the row $1$, depending on
$i$. The active states are exactly from 
\[
\left\{ 1\right\} \times\mathfrak{b}\!\left(i\right)\times\left\{ \mathrm{T}\right\} \mbox{ and }\left\{ 1\right\} \times R\backslash\mathfrak{b}\!\left(i\right)\times\left\{ \mathrm{F}\right\} ,
\]
because this is exactly the range of $\alpha_{i}$ within the guard
table. Let us continue by an induction. We claim that for $0\le\tau<d_{i}$
it holds what we have already proved for $\tau=0$:\end{proof}
\begin{enumerate}
\item If $\tau\ge1$, the letter $y_{\tau}$ lies in $I_{i}$. Moreover,
if $\tau>d_{i}$, it holds that $w_{\tau}=\kappa$.
\item After the application of $y_{\tau}$ the active states within the
guard table are exactly from 
\[
\left\{ \tau+1\right\} \times\mathfrak{b}\!\left(i\right)\times\left\{ \mathrm{T}\right\} \mbox{ and }\left\{ \tau+1\right\} \times R\backslash\mathfrak{b}\!\left(i\right)\times\left\{ \mathrm{F}\right\} .
\]

\end{enumerate}
For $i=0$ both the claims hold. Take some $1\le\tau<d_{i}$ and suppose
that the claims hold for $\tau-1$. Let us use the second claim for
$\tau-1$ to prove the first claim for $\tau$. So all the states
from
\[
\left\{ \tau\right\} \times\mathfrak{b}\!\left(i\right)\times\left\{ \mathrm{T}\right\} \mbox{ and }\left\{ \tau\right\} \times R\backslash\mathfrak{b}\!\left(i\right)\times\left\{ \mathrm{F}\right\} 
\]
are active. Which of the letters could appear as $y_{\tau}$? The
letters $\omega_{1},\dots,\omega_{t}$ and $\alpha_{1},\dots,\alpha_{m}$
would map all the active states to the rows $0$ and $1$, which is
a contradiction. Consider any letter $x_{k,j}$ for $k\neq i$. It
holds that $\mathfrak{b}\!\left(i\right)\neq\mathfrak{b}\!\left(k\right)$,
so there is some $c\in R$ lying in their symmetrical difference.
For such $c$ it holds that 
\[
\left(\tau,c,\mathrm{T}\right)\overset{x_{k,j}}{\longrightarrow}\,\left(0,c,\mathrm{T}\right)\mbox{ if }c\in\mathfrak{b}\!\left(i\right)\backslash\mathfrak{b}\!\left(k\right)
\]
or
\[
\left(\tau,c,\mathrm{F}\right)\overset{x_{k,j}}{\longrightarrow}\,\left(0,c,\mathrm{F}\right)\mbox{ if }c\in\mathfrak{b}\!\left(k\right)\backslash\mathfrak{b}\!\left(i\right)
\]
which necessarily activates some state in the row $0$, which is a
contradiction again. So, $y_{\tau}\in I_{i}$. Moreover, if $\tau>d_{i}$,
the letters from $I_{i}\backslash\left\{ \kappa\right\} $ map the
entire row $\tau$ into the row $0$, so the only possibility is $y_{\tau}=\kappa$. 

The letter $y_{\tau}$ maps all the active states right down to the
row $\tau+1$, so the second claim for $\tau$ holds as well.

\begin{proof}
[C3]It is easy to verify that no ,,suicidal'' transitions within
the guard table are used, so during the application of 
\[
y_{1}\dots y_{d_{i}}\kappa^{z\!\left(t\right)-1-d_{i}}
\]
 the activity markers just flow down from the row $1$ to the row
$z\!\left(t\right)$. Since $y_{1}\dots y_{d_{i}}$ is a reset word
of $A_{i}$, there also remains only one particular state $s$ within
$\left\{ 1,\dots,t\right\} $. Finally the letter $\omega_{s}$ is
applied which maps $s$ and the entire row $z\!\left(t\right)$ directly
to $\mathrm{D}$.
\end{proof}

\section{Parameterized Complexity of SRCP}

\subsection{Parameterization by the Number of States}

We point out that SRCP parameterized by the number of states has a
polynomial kernel, so it necessarily lies in FPT.
\begin{theorem}
There is a polynomial kernel for \noun{SRCP} parameterized by $t=\left|Q\right|$.\end{theorem}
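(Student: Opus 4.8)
The plan is to kernelize by bounding the size of the whole instance $(I, G, k)$ by a polynomial in $t = |Q|$. The alphabet $I$ is already bounded: since $G$ is admissible with out-degree $|I|$, we have $|I| \le |E| / |Q| \le t$ (a multigraph on $t$ vertices with constant out-degree has at most $t\cdot|I|$ edges, but more to the point $|I|$ is just the common out-degree, which is at most the number of vertices reachable, hence at most $t$ — and in any case if $|I| > t$ then two parallel edges out of some vertex can be merged for colouring purposes, or one argues directly). So the only unbounded quantity is $k$. The key observation is that any synchronizing colouring of a $t$-state automaton has a reset word of length at most $z(t) = (t^3-t)/6$ by Theorem~\ref{thm:z(t)}. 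Hence if $k \ge z(t)$, the instance is equivalent to asking merely whether $G$ is road colorable, and by the Road Coloring Theorem the answer is simply ``yes'' (since $G$ is admissible by hypothesis) — so we output a trivial true instance. If $k < z(t)$, then $k$ is already polynomially bounded in $t$.

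The remaining step is to bound $|E|$, i.e. the number of edges of $G$, by a polynomial in $t$. Here is the main point where care is needed: a priori an admissible multigraph on $t$ vertices can have arbitrarily many parallel edges, so $|E|$ and $|I|$ are unbounded in $t$. The fix is a reduction rule on parallel edges. If between some ordered pair $(u,v)$ there are more than $t$ parallel edges (or more than some explicit polynomial threshold), we argue that we may delete one of them together with a uniform adjustment — concretely, one shows that colourings of such a reduced graph correspond to colourings of the original with a reset word of the same length, because for synchronization only the induced maps $Q \to Q$ matter, and with at most $t$ parallel $(u,v)$-edges one already has all the flexibility that extra parallel edges could provide. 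After exhaustively applying this rule, $|E| \le t \cdot |I| \le t \cdot (\text{poly in } t)$, and the colour alphabet satisfies $|I| \le t$, so the whole instance has size polynomial in $t$.

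Putting it together: the kernelization algorithm first checks $k \ge z(t)$ and if so outputs a fixed trivial yes-instance; otherwise it applies the parallel-edge reduction rule until no longer applicable, yielding an equivalent instance of size $\mathrm{poly}(t)$ with parameter $t$ unchanged. All steps run in polynomial time. The main obstacle I anticipate is making the parallel-edge reduction rule precise and proving its correctness — one must verify that bounding the multiplicity of each arc does not change the answer to SRCP for the given bound $k$, which requires a small combinatorial argument about which transition functions $Q \to Q$ are realizable as colour classes. Everything else (the alphabet bound, the $k \ge z(t)$ shortcut via the Road Coloring Theorem, and polynomial running time) is routine.
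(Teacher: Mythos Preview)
Your overall strategy matches the paper's: dispose of the case $k \ge z(t)$ (your use of the Road Coloring Theorem here is fine and arguably cleaner than the paper's phrasing), and for $k < z(t)$ reduce the multiplicity of parallel edges to bound $|E|$ and $|I|$ polynomially in $t$. However, two points in your edge-reduction step are genuine gaps, not just missing details.

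First, your rule ``if some ordered pair $(u,v)$ has more than $t$ parallel edges, delete one'' destroys the constant out-degree, so the result is no longer an admissible graph and not a valid SRCP instance. The ``uniform adjustment'' you allude to must be made explicit: you have to delete one outgoing edge from \emph{every} vertex simultaneously, thereby decreasing $|I|$ by one. The paper does exactly this: while the out-degree exceeds $t\cdot(z(t)-1)$, at each vertex pick an outgoing multiedge of maximum multiplicity (which by pigeonhole is at least $z(t)$) and remove one edge from it. This keeps the graph admissible throughout.

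Second, your threshold $t$ is too small. The correctness argument you need is: if $G$ has a colouring with a reset word $w$ of length $\le k < z(t)$, then $w$ contains at most $z(t)-1$ distinct letters, so at each vertex at most $z(t)-1$ outgoing edges are ``used'', and for each target $v$ the number of used edges $u\to v$ is at most $\min(\text{original multiplicity}, z(t)-1)$. Since the reduction never drops a multiedge below multiplicity $z(t)-1$, the used edges survive in $G'$ and the colouring can be reproduced there. With threshold $t$ this argument fails, because $k$ may be as large as $z(t)-1 \approx t^{3}/6$. Your parenthetical ``or more than some explicit polynomial threshold'' is on the right track; the threshold that makes the proof go through is $z(t)-1$, yielding final out-degree at most $t\cdot(z(t)-1)$, which is $O(t^{4})$.
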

\begin{proof}
The algorithm takes an instance of SRCP, i.e. an alphabet $I$, an
admissible graph $G=\left(Q,E\right)$ with out-degrees $\left|I\right|$,
and a number $k\in\mathbb{N}$. It produces another instance of size
depending only on $t=\left|Q\right|$. If $k\ge z\!\left(t\right)$,
we just solve the problem using Corollary \pageref{cor:polAlgSYN}
and output some trivial instance. Otherwise the output instance is
denoted by $I',G'=\left(Q',E'\right),k'$ where
\begin{eqnarray*}
Q' & = & Q\\
k' & = & k\\
\left|I'\right| & = & \min\left\{ \left|I\right|,t\cdot\left(z\!\left(t\right)-1\right)\right\} 
\end{eqnarray*}
and the algorithm just deletes appropriate edges in order to reduce
the out-degree to $\left|I'\right|$. Let us use a procedure that:\end{proof}
\begin{itemize}
\item takes an admissible graph with out-degree $d>t\cdot\left(z\!\left(t\right)-1\right)$
\item for each of its vertices:

\begin{itemize}
\item finds an outgoing multiedge with the largest multiplicity (which is
at least $z\!\left(t\right)$)
\item deletes one edge from the multiedge
\end{itemize}
\end{itemize}
Clearly the resulting graph has out-degree $d-1$. We create the graph
$G'$ by repeating this procedure (starting with $G$) until the out-degree
is at most $t\cdot\left(z\!\left(t\right)-1\right)$.

Now we claim that
\begin{eqnarray*}
 & \left(I,G,k\right)\in\mathrm{SRCP}\\
 & \Updownarrow\\
 & \left(I',G',k'\right)\in\mathrm{SRCP}.
\end{eqnarray*}
The upward implication is trivial since any coloring of $G'$ can
be extended to $G$ and the appropriate reset word can be still used.
On the other hand, let us have a coloring $\delta$ of $G$ such that
$\left|\delta\!\left(Q,w\right)\right|=1$ for a word $w$ of length
at most $k<z\!\left(t\right)$, so it uses at most $z\!\left(t\right)-1$
letters from $I$. If we delete from $G$ all the edges labeled by
non-used letters, we get a subgraph of $G'$ because during the reduction
of edges we have reduced only multiedges having more than $z\!\left(t\right)-1$
edges. So we are able to color $G'$ according to the used letters
of $G$ and synchronize it by the word $w$.
\begin{corollary}
\noun{SRCP} parameterized by $t=\left|Q\right|$ lies in \noun{FPT}.
\end{corollary}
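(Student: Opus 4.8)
The plan is to derive the corollary immediately from the preceding theorem together with the standard folklore fact, already recalled in the subsection on parameterized complexity, that \emph{a parameterized problem lies in FPT if and only if it admits a kernel}. Since the theorem exhibits a polynomial-time procedure that maps an arbitrary instance $\left(I,G,k\right)$ of \noun{SRCP} to an equivalent instance $\left(I',G',k'\right)$ whose size is bounded by a function of $t=\left|Q\right|$ alone (here $\left|Q'\right|=t$, $k'\le z\!\left(t\right)$, and $\left|I'\right|\le t\cdot\left(z\!\left(t\right)-1\right)$, with $z\!\left(t\right)=\frac{t^{3}-t}{6}$), this procedure is in particular a kernelization. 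Hence \noun{SRCP} parameterized by $t$ has a kernel, and therefore lies in FPT.

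Concretely, I would argue as follows. First invoke the theorem to obtain the kernelization $x=\left(I,G,k\right)\mapsto y=\left(I',G',k'\right)$, noting that $\left|y\right|$ is polynomially (indeed, at worst, cubically in a few places) bounded in $t$ and that $x\in\mathrm{SRCP}\Leftrightarrow y\in\mathrm{SRCP}$. Then, to decide $x$, run the kernelization in polynomial time and afterwards decide the bounded-size instance $y$ by brute force: since $\left|y\right|\le g\!\left(t\right)$ for a fixed computable $g$, any decision procedure for \noun{SRCP} — for instance, enumerating all at most $\left(\left|I'\right|!\right)^{\left|Q'\right|}$ colorings and, for each, checking synchronization within length $k'$ using Theorem~\ref{lem:polAlg} — terminates in time $f\!\left(t\right)$ for some computable $f$. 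The total running time is $f\!\left(t\right)+r\!\left(\left|x\right|\right)$ for a polynomial $r$, which is the defining form of an FPT algorithm with parameter $t$.

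There is essentially no obstacle here: the work has all been done in the theorem, and the corollary is just the observation that a polynomial kernel (a fortiori, any kernel) yields fixed-parameter tractability. The only point that warrants a sentence is that the kernelized instance, although of size bounded in $t$, must still be \emph{decided}; this is handled by the trivial brute-force search over colorings combined with the polynomial-time synchronizability test of Theorem~\ref{lem:polAlg} (or, when $k'\ge z\!\left(t\right)$, directly by Corollary~\ref{cor:polAlgSYN}), and the cost of that search depends only on $t$. One should also remark that the equivalence $x\in\mathrm{SRCP}\Leftrightarrow y\in\mathrm{SRCP}$ is exactly what was established in the proof of the theorem, so no additional correctness argument is needed. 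This completes the proof.
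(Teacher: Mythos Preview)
Your proposal is correct and matches the paper's approach: the paper states the corollary without proof, relying on the folklore equivalence ``having a kernel $\Leftrightarrow$ lying in FPT'' that was already recalled in the preliminaries, together with the polynomial kernel constructed in the preceding theorem. Your write-up simply spells out the brute-force decision step on the kernelized instance, which is exactly the standard argument underlying that equivalence.
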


\subsection{Restriction to $\left|I\right|=2$ and $k=3$}

Here we prove that \noun{SRCP} restricted to $\left|I\right|=2$ and
$k=3$ is decidable in polynomial time. If $G=\left(Q,E\right)$ is
a graph, by $V_{i}\!\left(q\right)$ we denote the set of vertices
from which there is a path of length $i$ leading to $q$ and there
is no shorter one. For any $w\in I^{*}$, $\mathbb{G}_{w}$ denotes
the set of graphs with outdegree $2$ that admit a coloring $\delta$
such that $\delta\!\left(Q,w\right)=\left\{ q\right\} $ for some
$q\in Q$.
\begin{lemma}
\label{lem:Let-.-Then}Let $G=\left(Q,E\right)\in\mathbb{G}_{abb}\backslash\mathbb{G}_{aaa}$.
Then some of the following conditions hold:
\begin{enumerate}
\item There is a vertex $q\in Q$ such that each vertex has an outgoing
edge leading into $V_{2}\!\left(q\right)$.
\item $G\in\mathbb{G}_{aba}$
\end{enumerate}
\end{lemma}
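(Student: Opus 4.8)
The plan is to fix a colouring of $G$ that realises $abb$ as a reset word and to read off, from the three synchronisation steps, enough information about the distance layers around the synchronising vertex; then either the pattern of condition~1 is already present, or the colouring can be repaired into one that synchronises by $aba$.

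So fix a colouring $\delta$ of $G$ with $\delta(Q,abb)=\{q\}$, and set $P_1=\delta(Q,a)$ and $P_2=\delta(Q,ab)$, so that $\delta(P_1,b)=P_2$ and $\delta(P_2,b)=\{q\}$. First I record the elementary facts: every vertex of $P_2$ sends its $b$-edge to $q$, hence $P_2\subseteq V_0(q)\cup V_1(q)$; every vertex of $P_1$ reaches $q$ along the walk $bb$, hence $P_1\subseteq V_0(q)\cup V_1(q)\cup V_2(q)$; and every vertex of $Q$ sends its $a$-edge into $P_1$. I also invoke $G\notin\mathbb{G}_{aaa}$ at once: if $\left|P_1\right|=1$ or $\left|\delta(P_1,a)\right|=1$, then $a$ or $aa$ — hence $aaa$ — would reset $G$ under $\delta$, so $\left|P_1\right|\ge 2$ and $\left|\delta(Q,aa)\right|\ge 2$. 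Finally, if $\left|P_2\right|=1$ then $ab$ already resets $G$, so $aba$ does as well and condition~2 holds; hence I may assume $\left|P_2\right|\ge 2$.

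Now the dichotomy. If $\delta(v,a)\in V_2(q)$ for \emph{every} $v\in Q$, then condition~1 holds verbatim with this $q$, each vertex using its $a$-edge. Otherwise $P_1$ meets $V_0(q)\cup V_1(q)$; fix $p\in P_1$ with $\dist(p,q)\le 1$ and aim for condition~2. The natural colouring $\delta'$ to try is the one that agrees with $\delta$ outside $P_2$ and swaps the roles of $a$ and $b$ on $P_2$; then $\delta'(v,a)=\delta(v,b)=q$ for all $v\in P_2$, so $\delta'(P_2,a)=\{q\}$, and it suffices to prove $\delta'(Q,ab)\subseteq P_2$, which gives $\delta'(Q,aba)=\{q\}$ and hence $G\in\mathbb{G}_{aba}$. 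From the facts above one gets $\delta'(Q,a)\subseteq P_1\cup\{q\}$, and then one checks, according to whether a vertex of this image lies in $P_1\setminus P_2$, in $P_1\cap P_2$, or equals $q$, that its $\delta'$-$b$-edge returns to $P_2$. The problematic configurations are precisely $q\notin P_1$ with $\delta(q,b)\notin P_2$, and $P_1\cap P_2\neq\emptyset$; in these one modifies the swap (for instance also swapping at $q$, or routing to a different common $a$-target), and it is here that the bounds $\left|P_1\right|,\left|\delta(Q,aa)\right|\ge 2$ from $G\notin\mathbb{G}_{aaa}$, together with the special vertex $p$, are what makes the repair go through.

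The step I expect to be the main obstacle is exactly this last one: organising the case distinction so that the repaired automaton still has out-degree $2$ at every vertex and $ab$ provably carries all of $Q$ into $P_2$, and — crucially — isolating the sub-case in which dropping $G\notin\mathbb{G}_{aaa}$ would break the argument (it should be the configuration where $P_1$ is squeezed into $V_0(q)\cup V_1(q)$ so tightly that the only available repair synchronises by $aaa$ rather than by $aba$). The remaining ingredients — the containments among $P_1$, $P_2$ and the layers $V_i(q)$, and the verification that a legitimate recolouring yields the claimed reset word — are short computations I would not spell out in detail.
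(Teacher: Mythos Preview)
Your overall shape matches the paper's: split into the case where every $a$-edge already lands in $V_2(q)$ (giving condition~1 immediately) and a complementary case in which you recolour to make $aba$ a reset word. The genuine gap is in how you exploit $G\notin\mathbb{G}_{aaa}$. The bounds $\left|P_1\right|\ge 2$, $\left|\delta(Q,aa)\right|\ge 2$ and the auxiliary vertex $p$ are red herrings; the single consequence you actually need is that \emph{$q$ carries no loop}. Since every vertex reaches $q$ by a walk of length $3$ (the $abb$-walk), a loop at $q$ would let you colour one outgoing edge of each vertex along a shortest path to $q$ (and the loop at $q$ itself) and obtain $aaa$ as a reset word, contradicting the hypothesis.

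With ``no loop at $q$'' in hand, both of your problematic configurations evaporate and no repair of the swap is needed. First, $P_1\cap P_2=\emptyset$: if some $v\in P_1$ had $\delta(v,b)=q$ then $q\in\delta(P_1,b)=P_2$, forcing $\delta(q,b)=q$. Second, your Case~2 already forces $q\in P_1$: if $t\in P_1\cap V_1(q)$ then the edge $t\to q$ cannot be $\delta(t,b)$ (same loop argument via $q\in P_2$), so $\delta(t,a)=q$ and hence $q\in\delta(Q,a)=P_1$. Now your swap at $P_2$ finishes in one line: $\delta'(Q,a)\subseteq P_1\cup\{q\}=P_1$; then $\delta'(P_1,b)=\delta(P_1,b)=P_2$ because $P_1$ is disjoint from the swap set; and $\delta'(P_2,a)=\{q\}$. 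The paper runs essentially the same argument but phrases the dichotomy as $q\notin\delta(Q,a)$ versus $q\in\delta(Q,a)$ (equivalent to yours once the loop is excluded) and performs the swap on the slightly larger set $W=\{s:\delta(s,b)=q\}\supseteq P_2$, verifying $\delta'(s,aba)=q$ by cases over $s\in V_i(q)$ for $i=0,1,2,3$.
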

\begin{proof}
Let $G=\left(Q,E\right)\in\mathbb{G}_{abb}\backslash\mathbb{G}_{aaa}$.
So $G$ admits a coloring $\delta$ such that 
\[
\delta\!\left(Q,abb\right)=\left\{ q\right\} 
\]
for a state $q\in Q$.\end{proof}
\begin{itemize}
\item If the coloring $\delta$ satisfies $q\notin\delta\!\left(Q,a\right)$,
notice that each edge labeled by $a$ have to lead into $V_{2}\!\left(q\right)$.
Indeed:

\begin{itemize}
\item It cannot lead to $q$ due to $q\notin\delta\!\left(Q,a\right)$.
\item It cannot lead into $V_{1}\!\left(q\right)$ because in such case,
using $q\notin\delta\!\left(Q,a\right)$, it would hold that $q\in\delta\left(Q,ab\right)$,
so it would be necessary to have $\delta\left(q,b\right)=q$, but
from $G\notin\mathbb{G}_{aaa}$ it follows that there is no loop on
$q$.
\item It cannot lead to $V_{3}\!\left(q\right)$, because there is no path
of length $2$ from $V_{3}\!\left(q\right)$ to $q$.
\end{itemize}

So the condition (1) holds.

\item Otherwise the coloring $\delta$ satisfies $q\in\delta\!\left(Q,a\right)$.
Denote 
\[
W=\left\{ s\in Q\mid\mbox{in }\delta\mbox{ there is an edge }s\overset{b}{\longrightarrow}q\right\} .
\]
Now define another coloring $\delta'$ by switching the colors of
the two edges leaving each state of $W$. We claim that
\[
\delta'\!\left(Q,aba\right)=\left\{ q\right\} 
\]
and so the condition (2) holds. Indeed:

\begin{itemize}
\item Take $s\in V_{3}\!\left(q\right)$. In $\delta$ there is a path
\begin{equation}
s\overset{a}{\longrightarrow}t\overset{b}{\longrightarrow}u\overset{b}{\longrightarrow}q.\label{eq: path1}
\end{equation}
Because $s\in V_{3}\!\left(q\right)$, it holds that $t\in V_{2}\!\left(q\right)$
and $u\in V_{1}\!\left(q\right)$. It follows that $t\notin W,u\in W$
and thus in $\delta'$ there is a path
\begin{equation}
s\overset{a}{\longrightarrow}t\overset{b}{\longrightarrow}u\overset{a}{\longrightarrow}q.\label{eq:path2}
\end{equation}

\item Take $s\in V_{2}\!\left(q\right)$. In $\delta$ there is a path (\ref{eq: path1}). 

\begin{itemize}
\item If $t\in V_{2}\!\left(q\right)$, we get again that $t\notin W,u\in W$
and thus in $\delta'$ there is a path (\ref{eq:path2}). 
\item Otherwise we have $t\in V_{1}\!\left(q\right)$. Because $G\notin\mathbb{G}_{aaa}$,
there is no loop on $q$, thus $u\neq q$ and thus $t\notin W$. But
$u\in W$, so we get a path (\ref{eq:path2}) again.
\end{itemize}
\item Take $s\in V_{1}\!\left(q\right)$. In $\delta'$ there is always
an edge $s\overset{a}{\longrightarrow}q$, so we need just $\delta'\!\left(q,ba\right)=q$.
Because we assume that $q\in\delta\!\left(Q,a\right)$, in $\delta$
there have to be a cycle $q\overset{b}{\longrightarrow}r\overset{b}{\longrightarrow}q$
for some $r\in V_{1}\!\left(q\right)$. In $\delta'$ we have $q\overset{b}{\longrightarrow}r\overset{a}{\longrightarrow}q$.
\item For $s=q$ we apply the same reasoning as for $s\in V_{2}\!\left(q\right)$. 
\end{itemize}
\end{itemize}
\begin{theorem}
\label{thm:For-each-}For each $G$ with outdegree $2$ it holds that
\[
G\in\mathbb{G}_{abb}\backslash\left(\mathbb{G}_{aba}\cup\mathbb{G}_{aaa}\right)
\]
if and only if
\begin{itemize}
\item \textup{It holds that $G\notin\mathbb{G}_{aba}\cup\mathbb{G}_{aaa}$.}
\item \textup{There is a vertex $q\in Q$ such that each vertex has an outgoing
edge leading into $\ensuremath{V_{2}\!\left(q\right)}$.}
\end{itemize}
\end{theorem}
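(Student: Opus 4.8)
The plan is to prove the two directions of the equivalence separately, relying heavily on Lemma~\ref{lem:Let-.-Then}.

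For the forward direction, suppose $G\in\mathbb{G}_{abb}\setminus(\mathbb{G}_{aba}\cup\mathbb{G}_{aaa})$. The first bullet is then immediate from the assumption. For the second bullet, note that $G\in\mathbb{G}_{abb}\setminus\mathbb{G}_{aaa}$, so Lemma~\ref{lem:Let-.-Then} applies and gives one of its two conditions. Condition~(2) of the lemma would say $G\in\mathbb{G}_{aba}$, contradicting our hypothesis; hence condition~(1) holds, which is exactly the second bullet.

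For the backward direction, suppose the two bullets hold. The only thing we need to establish is that $G\in\mathbb{G}_{abb}$, since $G\notin\mathbb{G}_{aba}\cup\mathbb{G}_{aaa}$ is given by the first bullet. So I would take the vertex $q$ promised by the second bullet and construct a coloring witnessing $G\in\mathbb{G}_{abb}$ directly: for each vertex $s$ pick an outgoing edge into $V_2(q)$ and color it $a$ (the other outgoing edge gets $b$); then by definition of $V_2(q)$ every state reaches, along some $a$-edge, a vertex in $V_2(q)$, from which there is a path of length $2$ to $q$. The subtlety is that a path of length $2$ need not be labelled $bb$ under our chosen coloring, so the coloring must be refined: from every vertex in $V_2(q)$ we need the $b$-edge to lead into $V_1(q)$, and from every vertex in $V_1(q)$ we need the $b$-edge to lead to $q$. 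I would argue this is consistent with the $a$-choice made above --- a vertex in $V_1(q)$ has an edge to $q$ (by definition of $V_1$), and since we want that edge to be $b$ we need its $a$-edge to go into $V_2(q)$, which is guaranteed by the second bullet; similarly a vertex in $V_2(q)$ has an edge into $V_1(q)$, which we color $b$, while its $a$-edge goes into $V_2(q)$ by the bullet. Some care is needed when the two outgoing edges of a vertex have overlapping targets or when $V_i(q)$ classes are not disjoint from each other's role, but the layered structure $V_1,V_2,V_3,\dots$ partitions $Q$ and the bullet forces every vertex to have an $a$-edge one or two layers "closer", so the assignment is well defined.

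The main obstacle I anticipate is exactly this consistency check for the backward direction: showing that the local choices (which outgoing edge is $a$, which is $b$) can be made simultaneously at every vertex so that $\delta(Q,abb)=\{q\}$, rather than merely $\delta(s,abb)=q$ for each $s$ along independently chosen paths. One must verify that forcing the $a$-edge of each vertex into $V_2(q)$ does not conflict with forcing the $b$-edges of $V_1(q)$-vertices to $q$ and of $V_2(q)$-vertices into $V_1(q)$; the key point is that these constraints only ever touch the $a$-edge or the $b$-edge of a given vertex, never both in contradictory ways, because a vertex in $V_1(q)$ has an $a$-edge into $V_2(q)$ (bullet), leaving its edge to $q$ free to be $b$, and a vertex in $V_2(q)$ has an $a$-edge into $V_2(q)$ (bullet), leaving an edge into $V_1(q)$ free to be $b$. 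I expect this to be the bulk of the write-up, with the forward direction being a two-line deduction from the lemma.
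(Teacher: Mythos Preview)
Your proposal is correct and follows essentially the same approach as the paper: the forward direction is dispatched by Lemma~\ref{lem:Let-.-Then} exactly as you describe, and for the backward direction the paper constructs the same coloring you do (edges into $V_2(q)$ get $a$, the rest get $b$) and verifies that $abb$ sends everything to $q$.

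The only difference is presentational. You phrase the coloring \emph{locally} (``for each vertex pick one outgoing edge into $V_2(q)$ and make it $a$''), which then makes you worry about a consistency check between the $a$-choices and the required $b$-behaviour. The paper phrases it \emph{globally}: every edge whose head lies in $V_2(q)$ is labeled $a$, every other edge is labeled $b$. With that framing the verification is three lines: from $s\in V_2(q)$ the edge witnessing the length-$2$ path goes to some $t\in V_1(q)$ (it cannot go to $q$ itself since $s\in V_2(q)$ has no length-$1$ path to $q$), and since $t\notin V_2(q)$ and $q\notin V_2(q)$ both this edge and the edge $t\to q$ are automatically labeled $b$. No separate consistency argument is needed, because ``is the head in $V_2(q)$?'' is a property of the edge alone, not of how the other edge at the same vertex was handled. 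Your anticipated ``main obstacle'' therefore evaporates once you adopt the global formulation.
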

\begin{proof}
The downward implication follows easily from Lemma \ref{lem:Let-.-Then}.
For the upward one we need only to deduce that $G\in\mathbb{G}_{abb}$.
We construct the following coloring $\delta$:\end{proof}
\begin{itemize}
\item The edges leading into $V_{2}\!\left(q\right)$ are labeled by $a$.
If two such edges start in a common vertex, they are labeled arbitrarily.
\item The other edges are labeled by $b$.
\end{itemize}
This works because from any state $s\in V_{2}\!\left(q\right)$ there
is an edge leading to some $t\in V_{1}\!\left(q\right)$, and from
$t$ there is an edge leading to $q$. We have labeled both these
edges by $b$. It follows that wherever we start, the path labeled
by $abb$ leads to $q$.
\begin{theorem}
SRCP with $l=2$ and $C=3$ lies in P.\end{theorem}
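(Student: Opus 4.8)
The plan is to reduce the statement to a finite collection of polynomial‑time tests, exploiting that for $|I|=2$ and $k=3$ there are only constantly many candidate reset words up to symmetry. First I would observe that a coloring of a graph with out‑degree $2$ is synchronized by \emph{some} word of length $3$ exactly when it is synchronized by one of $aaa,aab,aba,abb$ (the remaining four words are obtained by swapping the two letters, which is just a relabeling). So the task is to decide, given $G$, whether $G\in\mathbb{G}_{aaa}\cup\mathbb{G}_{aab}\cup\mathbb{G}_{aba}\cup\mathbb{G}_{abb}$. By symmetry of the two letters one may further merge $aab$ with $abb$, so it suffices to decide membership in $\mathbb{G}_{aaa}$, in $\mathbb{G}_{aba}$, and in $\mathbb{G}_{abb}$.

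Next I would dispatch each of these three membership problems in polynomial time. For $\mathbb{G}_{aaa}$: $G\in\mathbb{G}_{aaa}$ iff there is a vertex $q$ carrying a loop from which, walking backwards along edges, every vertex reaches $q$ in at most $3$ steps; concretely one guesses $q$ (there are $|Q|$ choices), checks $V_0(q)\cup V_1(q)\cup V_2(q)\cup V_3(q)=Q$ together with the existence of a suitable loop/short cycle structure, and then colors the ``towards $q$'' edges by $a$. This is a direct $V_i(q)$ computation, done by breadth‑first search, for each $q$. For $\mathbb{G}_{aba}$ and $\mathbb{G}_{abb}$ I would invoke Theorem~\ref{thm:For-each-}: it gives a polynomial‑time characterization of $\mathbb{G}_{abb}\setminus(\mathbb{G}_{aba}\cup\mathbb{G}_{aaa})$ purely in terms of the existence of a vertex $q$ with every vertex having an outgoing edge into $V_2(q)$, which again is checked by computing $V_2(q)$ for each $q$ and verifying the simple degree condition. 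Combined with the (symmetric) handling of $\mathbb{G}_{aba}$ — where an analogous ``each vertex has an outgoing edge into $V_1(q)$ whose target has an outgoing edge into $V_2(q)$, and $q$ sits on a length‑$3$ cycle returning to itself'' type criterion can be derived, or one simply runs the matching lemma for the word $aba$ — all of $\mathbb{G}_{aaa},\mathbb{G}_{aba},\mathbb{G}_{abb}$ become decidable in time polynomial in $|Q|$.

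The main obstacle is getting a clean polynomial‑time criterion for $\mathbb{G}_{aba}$ itself (as opposed to $\mathbb{G}_{abb}\setminus\mathbb{G}_{aba}$, which Lemma~\ref{lem:Let-.-Then} and Theorem~\ref{thm:For-each-} already handle). The resolution is to note that a word of length $3$ synchronizes a $2$‑out‑degree coloring iff the coloring realizes a very constrained ``funnel'' into the target vertex $q$: at each of the three time steps the set of active states shrinks along the prescribed letters, and one can read off, again using the sets $V_i(q)$, exactly which edges are forced to be colored which letter. Thus for each candidate $q$ the existence of an admissible coloring is an instance of a trivial constraint‑satisfaction check on the edges (each state's two out‑edges must receive distinct colors, and certain edges are forced), solvable in linear time. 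Running this over all $q\in Q$ and all four representative words, and returning \emph{yes} iff any succeeds, yields the desired polynomial‑time algorithm, so SRCP with $|I|=2$ and $k=3$ lies in $\P$.
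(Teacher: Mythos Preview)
Your high-level plan---reduce to finitely many candidate words and test each $\mathbb{G}_w$---is the same as the paper's, and your invocation of Theorem~\ref{thm:For-each-} for $\mathbb{G}_{abb}\setminus(\mathbb{G}_{aba}\cup\mathbb{G}_{aaa})$ is exactly what the paper does. Two points, however, are genuine gaps.

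First, the sentence ``by symmetry of the two letters one may further merge $aab$ with $abb$'' is wrong: swapping $a\leftrightarrow b$ sends $aab$ to $bba$, not to $abb$. The four words $aaa,aab,aba,abb$ are pairwise inequivalent under that symmetry, so $\mathbb{G}_{aab}$ must be treated as its own case. The paper keeps it as the separate item $\mathbb{G}_{aab}\setminus\mathbb{G}_{aaa}$ and cites \cite{ROM8} for its polynomial-time test; your decomposition simply omits it.

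Second, your closing ``resolution''---that for any fixed length-$3$ word and any candidate target $q$, the existence of a synchronizing coloring reduces to a \emph{trivial} locally forced CSP on the out-edges, and that one may therefore run this over \emph{all four} words---cannot be right. The paper itself proves (Section~5 and Table~\ref{tab:Complexities-of-SRCPW}) that deciding $G\in\mathbb{G}_{abb}$ is NP-complete. Hence no uniform ``funnel'' argument of the kind you sketch can succeed for $abb$; the repeated letter $b$ creates genuine global dependencies between the second and third steps. This is precisely why the paper never decides $\mathbb{G}_{abb}$ directly: the whole content of Lemma~\ref{lem:Let-.-Then} and Theorem~\ref{thm:For-each-} is that the \emph{difference} $\mathbb{G}_{abb}\setminus(\mathbb{G}_{aba}\cup\mathbb{G}_{aaa})$ admits a clean structural characterization, while the remaining three tests ($\mathbb{G}_{aaa}$, $\mathbb{G}_{aab}\setminus\mathbb{G}_{aaa}$, $\mathbb{G}_{aba}\setminus\mathbb{G}_{aaa}$) are imported from \cite{ROM8}. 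Your informal CSP sketch may well be salvageable for $aaa$, $aab$, and $aba$ individually, but it needs the actual arguments of \cite{ROM8} to be made precise, and it must not be applied to $abb$.
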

\begin{proof}
Let the algorithm test the membership of a given graph $G$ for the
following sets:\end{proof}
\begin{enumerate}
\item $\mathbb{G}_{aaa}$,
\item $\mathbb{G}_{aab}\backslash\mathbb{G}_{aaa}$,
\item $\mathbb{G}_{aba}\backslash\mathbb{G}_{aaa}$,
\item $\mathbb{G}_{abb}\backslash\left(\mathbb{G}_{aba}\cup\mathbb{G}_{aaa}\right)$.
\end{enumerate}
For the sets 1,2,3 the membership is polynomially testable due to
results from \cite{ROM8}. For the set $4$ we have proved it by Theorem
\ref{thm:For-each-}. It is easy to see that a graph $G$ should be
accepted if and only if it lies in some of the sets.

\subsection{Restriction to $\left|I\right|=2$ and $k=4$}
\begin{theorem}
SRCP remains NP-complete if restricted to $\left|I\right|=2$ and
$k=4$.\end{theorem}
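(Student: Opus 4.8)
The plan is to reduce from a known NP-complete problem — the most natural choice is \noun{SRCP} restricted to $\left|I\right|=2$ and $k=4$'s ``parent'', namely some NP-complete problem whose structure already encodes a binary choice and a bounded number of synchronization steps. Given the techniques available in \cite{ROM8conf}, the cleanest route is to reduce from the NP-complete variant of \noun{SRCP} with a larger alphabet and $k=4$ (or, failing that, directly from a satisfiability-type problem as in the road-coloring literature). The target instance must be an admissible graph $G$ with out-degree $2$, together with the request for a reset word of length exactly $4$. The reset word we aim to realize should have a rigid shape, say $abba$ or $abab$ over the two letters, so that the coloring is forced to route all vertices through a short funnel $V_4(q) \to V_3(q) \to V_2(q) \to V_1(q) \to q$ tracking the letters of $w$; the combinatorial content of the source problem is then encoded in which edges are allowed to carry which color.

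\textbf{Key steps, in order.} First I would fix the shape of the intended reset word $w \in \{a,b\}^4$ and describe, for a generic admissible out-degree-$2$ graph, the necessary and sufficient structural conditions on a coloring $\delta$ for $\delta(Q,w)$ to be a singleton — this is the analogue of Lemma~\ref{lem:Let-.-Then} and Theorem~\ref{thm:For-each-} but one level deeper, at word length $4$. Second, I would build a gadget graph from an instance of the source NP-complete problem: variable gadgets that present a genuine binary coloring choice (the two outgoing edges of a ``choice vertex'' going to places that force, respectively, the ``$a$-commitment'' or the ``$b$-commitment'' downstream), and clause/consistency gadgets that are satisfiable by a length-$4$ synchronizing coloring precisely when the encoded formula is satisfiable. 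Third, I would verify aperiodicity and the uniform out-degree $2$ of the constructed graph — aperiodicity is usually arranged by inserting a short self-loop-free cycle of coprime length, exactly as the excerpt hints (``the NP-completeness proofs have been made slightly more complicated in order to use strongly connected graphs only''). Fourth, prove the two directions of correctness: a satisfying assignment yields a coloring with reset word $w$ of length $4$; conversely, any coloring admitting a reset word of length at most $4$ must — by the structural analysis of step one — induce a consistent assignment, which then satisfies the formula. Finally, note membership in NP is immediate (guess the coloring, check synchronization in polynomial time via Theorem~\ref{lem:polAlg}), so the problem is NP-complete.

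\textbf{The main obstacle} I expect is step one together with the reverse direction of step four: at word length $4$ the space of ``successful'' colorings is considerably richer than at length $3$, because a would-be reset word need not be exactly $w$ — any word of length $\le 4$ will do — and the levels $V_1(q),\dots,V_4(q)$ can overlap or collapse in many ways, with $b$-loops, $2$-cycles, and short back-edges all potentially short-circuiting the intended funnel. So the gadgets have to be rigid enough that \emph{no} alternative short reset word and \emph{no} alternative routing exists: every degree of freedom in the coloring other than the intended variable choices must be pinned down, typically by attaching ``trap'' vertices that are synchronized only by the one intended color pattern and are desynchronized by any deviation. Designing these traps while keeping out-degree exactly $2$ and the graph aperiodic and (if one wants) strongly connected is the delicate part; once the rigidity lemma is in hand, the reduction and its correctness proof follow the standard pattern.

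\textbf{Remark on an alternative.} If engineering full rigidity at $k=4$ proves too cumbersome directly, an economical fallback is a padding argument: take the known NP-complete \noun{SRCP} instance with $\left|I\right|=2$ and $k=3$ being hard —  wait, that case is in $\mathrm{P}$ — so padding cannot reduce $k$; instead one pads \emph{up} from a harder small case, or reduces the alphabet of the known $\left|I\right|\ge 2$, $k=4$ hard instances down to $\left|I\right|=2$ by a letter-splitting construction that replaces each letter by a short block over $\{a,b\}$ and correspondingly subdivides edges, absorbing the alphabet into word length. This keeps $k=4$ only if the blocks have length one, so in fact the honest route really is the direct gadget construction above; I would therefore commit to that and invest the effort in the rigidity lemma.
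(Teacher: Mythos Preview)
Your committed route --- a direct reduction from a satisfiability problem via variable and clause gadgets, with a rigidity analysis pinning down the reset word --- is exactly the approach the paper takes. In that sense the plan is sound.

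But what you have written is a plan, not a proof: every step that carries technical weight is deferred. The paper's construction shows what you would need to supply. It reduces from 3-SAT. The graph $G_\Phi$ has three kinds of parts: a fixed $8$-state block $\mathbf{D}$, a $3$-state block $\mathbf{V}_i$ for each variable $x_i$, and a $5$-state block $\mathbf{C}_j$ for each clause $\mathcal{C}_j$. The block $\mathbf{D}$ is precisely the ``trap'' you anticipate: its internal wiring forces the sink to be a specific state $\mathrm{D}_4$, forces the labeling of every edge leaving $\mathbf{D}$, and forces the reset word to be exactly $w = abaa$ (not $abba$ or $abab$ as you speculate --- the choice is dictated by the gadget arithmetic, not made in advance). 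With $w$ fixed, the clause gadget $\mathbf{C}_j$ is built so that the $ab$-labeled $2$-path from its root $\mathrm{C}_{j,0}$ must land on one of the three literal states $l_{j,1},l_{j,2},l_{j,3}$; which one is the coloring's free choice. The variable gadget $\mathbf{V}_i$ is built so that the two literal states $x_i$ and $\overline{x_i}$ cannot simultaneously carry an $aa$-labeled $2$-path to $\mathrm{D}_4$; hence a coloring synchronized by $abaa$ cannot select both $x_i$ and $\overline{x_i}$ across the clauses, and the selected literals form a consistent (partial) satisfying assignment. Conversely, a satisfying assignment tells each clause gadget which literal to route to and each variable gadget which of its two orientations to take.

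So the gap is not conceptual but executional: you have correctly identified the shape of the argument and its main obstacle, but the proof lives entirely in the concrete gadget design and in the short sequence of lemmas that pin down the sink, the labeling of $\mathbf{D}$, and the word $w$. Until those are on paper, there is nothing to verify. Your detour through alphabet reduction and padding is correctly abandoned; you should commit from the outset to the direct 3-SAT reduction and spend the effort on the $\mathbf{D}$-block, since that is where the rigidity comes from.
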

\begin{proof}
Let us perform a reduction from 3-SAT. Consider a propositional formula
of the form
\[
\Phi=\bigwedge_{j=1}^{m}\mathcal{C}_{j}
\]
where
\[
\mathcal{C}_{j}=l_{i,1}\vee l_{i,2}\vee l_{i,3}
\]
and 
\[
l_{j,k}\in\left\{ x_{1},\dots,x_{n},\overline{x_{1}},\dots,\overline{x_{n}}\right\} 
\]
for each $j=1,\dots,m$ and $k=1,2,3$.

We construct a directed multigraph $G_{\Phi}=\left(Q,E\right)$ with
\[
\left|Q\right|=5m+3n+8
\]
 states, each of them having exactly two outgoing edges. We describe
the set $Q$ as a disjoint union of the sets
\[
Q=\mathbf{C}_{1}\cup\dots\cup\mathbf{C}_{m}\cup\mathbf{V}_{1}\cup\dots\cup\mathbf{V}_{n}\cup\mathbf{D},
\]
where
\begin{eqnarray*}
\mathbf{C}_{j} & = & \left\{ \mathrm{C}_{j,0},\mathrm{C}_{j,1},\mathrm{C}_{j,2},\mathrm{C}_{j,3},\mathrm{C}_{j,4}\right\} ,\\
\mathbf{V}_{i} & = & \left\{ x_{i},\overline{x_{i}},\mathrm{W}_{i}\right\} ,\\
\mathbf{D} & = & \left\{ \mathrm{D}_{0},\dots,\mathrm{D}_{7}\right\} ,
\end{eqnarray*}
for each $j=1,\dots,m$ and $i=1,\dots,n$. The parts $\mathbf{C}_{j}$
correspond to clauses, the parts $\mathbf{V}_{i}$ correspond to variables.
In each $\mathbf{V}_{i}$ there are two special states labeled by
literals $x_{i}$ and $\overline{x_{i}}$. All the edges of $G_{\Phi}$
are defined by Figures \ref{fig:C}, \ref{fig:V}, \ref{fig:D}. 
\begin{figure}
\begin{minipage}[t]{0.45\columnwidth}%
\begin{center}
\includegraphics[scale=0.8]{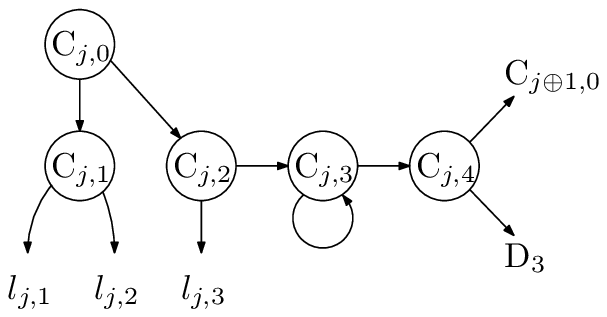}\caption{\label{fig:C}A part $\mathbf{C}_{j}$. Note the three edges that
depend on $\Phi$: they end in vertices labeled by literals from $\mathcal{C}_{j}$.}

\par\end{center}%
\end{minipage}\hfill{}%
\begin{minipage}[t]{0.45\columnwidth}%
\begin{center}
\includegraphics[scale=0.8]{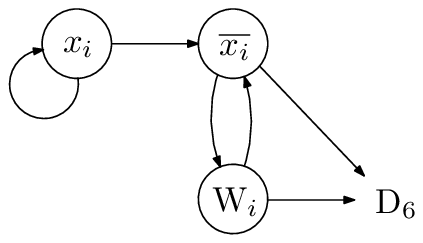}\caption{\label{fig:V}A part $\mathbf{V}_{i}$}

\par\end{center}%
\end{minipage}
\end{figure}
\begin{figure}
\begin{centering}
\includegraphics{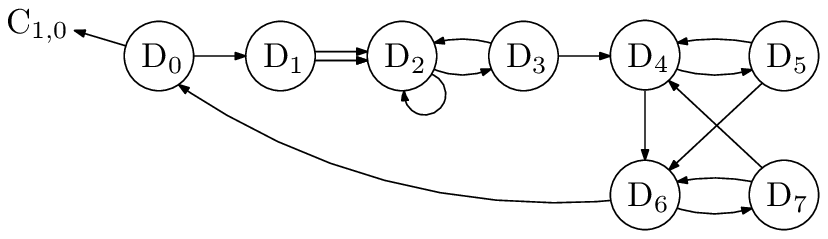}
\par\end{centering}

\caption{\label{fig:D}The part $\mathbf{D}$}
\end{figure}
\begin{figure}
\begin{centering}
\includegraphics{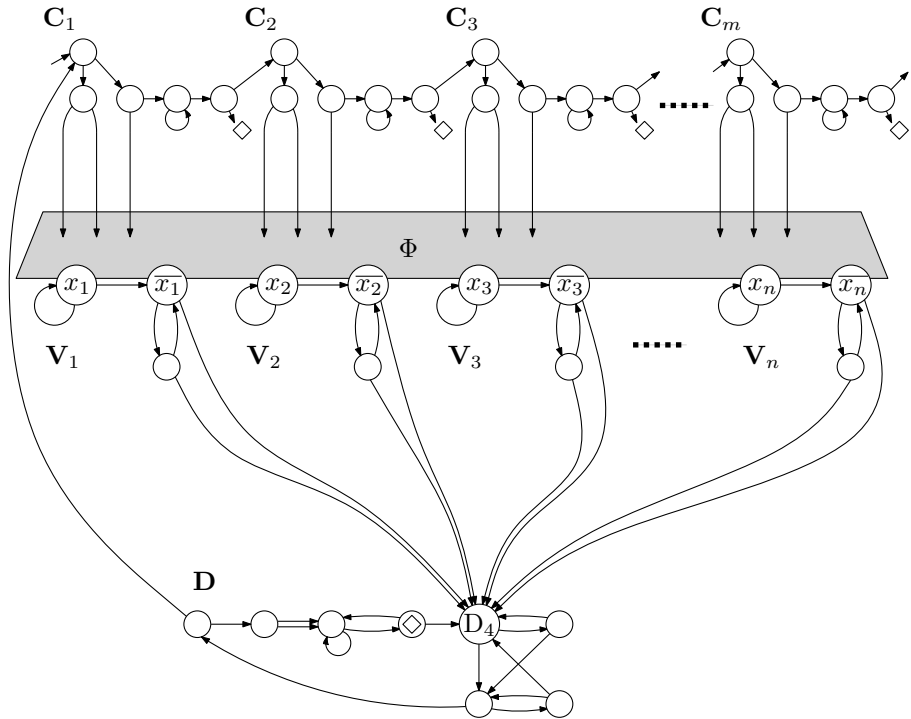}
\par\end{centering}

\caption{\label{fig:All}The entire $G_{\Phi}$}
\end{figure}
Figure \ref{fig:All} gives an overall picture of $G_{\Phi}$. Let
us prove that
\begin{eqnarray*}
 & \Phi\mbox{ is satisfiable}\\
 & \Updownarrow\\
 & G_{\Phi}\mbox{ can be synchronized by some word of length }4\mbox{ for some labeling}
\end{eqnarray*}

\end{proof}

\subsubsection*{The Upward Implication}
\begin{proof}
Suppose that there is a labeling $\delta$ by letters $a$ (solid)
and $b$ (dotted) such that there is a word 
\[
w=y_{1}\dots y_{4}\in\left\{ a,b\right\} ^{4}
\]
 with
\[
\left|Q.w\right|=1.
\]
Let $a$ be the first letter of $w$. By \emph{$k$-path} (resp. $k$\emph{-reachable})
we understand path of length exactly $k$ (resp. reachable by a path
of length exactly $k$).
\begin{figure}[H]
\begin{centering}
\includegraphics{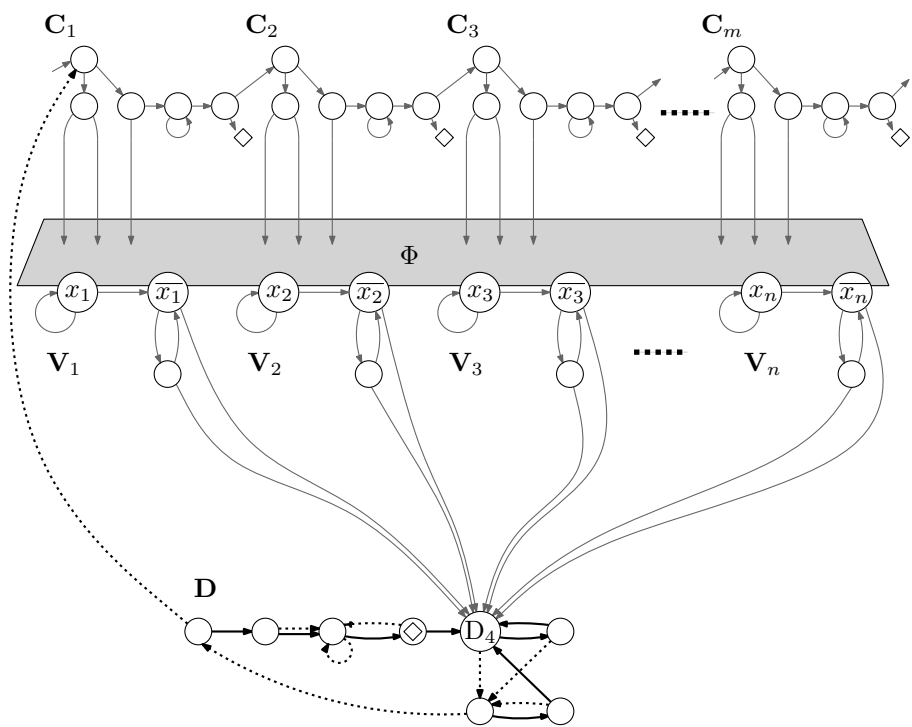}
\par\end{centering}

\caption{\label{fig:All-1}The entire $G_{\Phi}$ with the edges outgoing from
$\mathrm{D}$ colored. Bold arrows: $a$, dotted arrows: $b$.}
\end{figure}
\end{proof}
\begin{lemma}
The synchronization takes place in $\mathrm{D}_{4}$.\end{lemma}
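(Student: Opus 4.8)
The plan is to reduce the whole claim to a statement about the gadget $\mathbf{D}$ in isolation. The point is that, by construction, \emph{all} edges leaving the eight states $\mathrm{D}_{0},\dots,\mathrm{D}_{7}$ are already coloured in $G_{\Phi}$ — this is exactly what Figure~\ref{fig:All-1} records — so these particular transitions do not depend on the labelling $\delta$ we are given. Since $\left|Q.w\right|=1$ and $\mathbf{D}.w\subseteq Q.w$, the word $w$ in particular collapses $\left\{\mathrm{D}_{0},\dots,\mathrm{D}_{7}\right\}$ to a single state. Hence it suffices to show: the only state to which a word of length $4$ beginning with $a$ can send all of $\mathrm{D}_{0},\dots,\mathrm{D}_{7}$ through the fixed edges of $\mathbf{D}$ is $\mathrm{D}_{4}$. (As a by-product this also pins down $w$ up to the action on the rest of the graph, which is what the subsequent lemmas will exploit.)

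Concretely, I would first read off from Figure~\ref{fig:All-1} the image $S_{1}=\left\{\mathrm{D}_{0},\dots,\mathrm{D}_{7}\right\}.a$ and observe that $S_{1}$ is already a proper subset of $\mathbf{D}$. Then I would enumerate the at most $2^{3}$ possible suffixes $y_{2}y_{3}y_{4}\in\left\{a,b\right\}^{3}$ and trace $S_{1}.y_{2}$, then $S_{1}.y_{2}y_{3}$, then $S_{1}.y_{2}y_{3}y_{4}$ step by step along the fixed transitions. For every suffix but one, the resulting set still contains two distinct vertices of $\mathbf{D}$, which contradicts $\left|Q.w\right|=1$; for the remaining suffix the image equals exactly $\left\{\mathrm{D}_{4}\right\}$. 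Combining, $Q.w\subseteq\left\{\mathrm{D}_{4}\right\}$, and since $Q.w$ is nonempty it is the singleton $\left\{\mathrm{D}_{4}\right\}$.

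The one delicate point — and the only place where the reduction's design really matters — is to make sure the entire collapse of $\mathbf{D}$ takes place \emph{inside} $\mathbf{D}$ and uses only the fixed edges, so that the case analysis is genuinely independent of $\delta$. That is, one must check that during the four steps none of the activity markers originating in $\mathbf{D}$ leaves $\mathbf{D}$ before all of them have merged, and conversely that no $\delta$-coloured edge of a clause part $\mathbf{C}_{j}$ or a variable part $\mathbf{V}_{i}$ feeds an extra marker back into $\mathbf{D}$ that could alter the computation. Once this closure property is verified from Figures~\ref{fig:C}, \ref{fig:V}, \ref{fig:D}, the remaining bookkeeping is routine and short.
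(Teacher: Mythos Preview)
Your plan rests on a false premise: the edges leaving $\mathrm{D}_{0},\dots,\mathrm{D}_{7}$ are \emph{not} pre-coloured in the construction of $G_{\Phi}$. The graph $G_{\Phi}$ is defined purely as an uncoloured multigraph (via Figures~\ref{fig:C}, \ref{fig:V}, \ref{fig:D}); the labelling $\delta$ is the unknown whose existence we have merely assumed. Figure~\ref{fig:All-1} does not record something given --- it depicts the labelling of $\mathbf{D}$ that is \emph{derived} in the very next lemma, using the present lemma as a prerequisite. So reading off $\{\mathrm{D}_{0},\dots,\mathrm{D}_{7}\}.a$ from that figure and then enumerating suffixes is circular: you would be assuming the conclusion of the following lemma to prove this one.

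The paper's argument avoids the coloring entirely. It works with the notion of $4$-reachability in the underlying directed multigraph, which is well-defined independently of $\delta$: from $\mathrm{D}_{1}$ the set of $4$-reachable states is contained in $\mathbf{D}$; intersecting with the $4$-reachable set from $\mathrm{D}_{0}$ cuts this down to $\{\mathrm{D}_{2},\mathrm{D}_{3},\mathrm{D}_{4}\}$; and then a state \emph{outside} $\mathbf{D}$ --- namely $\mathrm{C}_{1,0}$ --- is needed to single out $\mathrm{D}_{4}$. In particular, restricting attention to $\mathbf{D}$ alone (even had its coloring been fixed) would not suffice, and your ``closure property'' about markers never leaving $\mathbf{D}$ is not something the construction guarantees. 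To repair your approach you would have to quantify over all $2^{8}$ labellings of the edges leaving $\mathbf{D}$ and still bring in reachability from some $\mathrm{C}_{j,0}$; at that point the paper's three-line reachability argument is both shorter and logically prior.
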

\begin{proof}
From $\mathrm{D}_{1}$ only states from $\mathbf{D}$ are $4$-reachable.
From $\mathrm{D_{0}}$ the only states within $\mathbf{D}$ that are
$4$-reachable are $\mathrm{D}_{2},\mathrm{D}_{3},\mathrm{D}_{4}$.
From $\mathrm{C}_{0,1}$ only $\mathrm{D_{4}}$ is $4$-reachable.\end{proof}
\begin{lemma}
All edges outgoing from states of $\mathbf{D}$ are labeled as in
Figure \ref{fig:All-1}.\end{lemma}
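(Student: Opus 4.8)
The plan is to propagate backwards through the gadget $\mathbf{D}$ of Figure~\ref{fig:D} the constraints already in force. By the preceding lemma and the running assumptions of the upward implication we know that $w=a\,y_2y_3y_4$ for some $y_2,y_3,y_4\in\left\{a,b\right\}$ and that every state of $G_\Phi$, in particular each of the eight states $\mathrm{D}_0,\dots,\mathrm{D}_7$, is mapped to $\mathrm{D}_4$ by $w$. Only the colours of the edges \emph{leaving} $\mathbf{D}$ are at issue, so the whole argument can be carried out inside $\mathbf{D}$; the edges entering $\mathbf{D}$ from the parts $\mathbf{C}_j$ and $\mathbf{V}_i$ play no role here (they matter only for later lemmas).

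Concretely, I would fix a coloring $\delta$ of the hypothesised kind and, for $i=0,1,2,3$, look at the set $S_i=\delta\!\left(\left\{\mathrm{D}_0,\dots,\mathrm{D}_7\right\},y_1\dots y_i\right)$ of markers originating in $\mathbf{D}$ that are still present after $i$ letters. Then $S_0=\left\{\mathrm{D}_0,\dots,\mathrm{D}_7\right\}$, $S_4=\left\{\mathrm{D}_4\right\}$, and every state of $S_i$ must still reach $\mathrm{D}_4$ along a path of length $4-i$ that follows the colours of $w$. First I would determine how $\mathrm{D}_4$ itself must behave so that a marker sitting on it survives all four applications (this pins the loop(s) at $\mathrm{D}_4$); combined with the facts ``from $\mathrm{D}_1$ only states of $\mathbf{D}$ are $4$-reachable'' and ``from $\mathrm{D}_0$ the only $4$-reachable states of $\mathbf{D}$ are $\mathrm{D}_2,\mathrm{D}_3,\mathrm{D}_4$'' established in the previous lemma, this already constrains where $\mathrm{D}_0$ and $\mathrm{D}_1$ may go. Then, walking the four steps in order and using the explicit adjacencies of $\mathbf{D}$ from Figure~\ref{fig:D}, I would check that at each step and at each currently active state of $\mathbf{D}$ exactly one of the two outgoing colours keeps all markers on a length-$\left(4-i\right)$ path to $\mathrm{D}_4$; choosing the other colour would either strand a marker outside $\mathbf{D}$ (contradicting the reachability facts) or leave it too far from $\mathrm{D}_4$. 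The colouring that survives all these tests is precisely the one drawn in Figure~\ref{fig:All-1}.

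The step I expect to be the main obstacle is the bookkeeping of \emph{timing}: a single state $\mathrm{D}_j$ is typically active at several of the times $1,2,3,4$, since markers reach it from different starting vertices, so both of its outgoing colours are constrained simultaneously by several of the sets $S_i$ (for instance $\mathrm{D}_j$ may have to advance under $y_2$ yet not escape $\mathbf{D}$ under $y_3$). I would organise this as a short table recording, for each $\mathrm{D}_j$, the admissible head of its $a$-edge and of its $b$-edge at each relevant time, and then simply read the forced colouring off the table. The one genuinely delicate point is to rule out degenerate labellings in which the two outgoing edges of some $\mathrm{D}_j$ have the same head; these have to be excluded, or shown to make synchronisation in four steps impossible, using the concrete picture in Figure~\ref{fig:D}.
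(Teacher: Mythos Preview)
Your plan is in the right spirit but much heavier than what the lemma needs, and it entangles two unknowns---the coloring of $\mathbf{D}$ and the remaining letters $y_{2}y_{3}y_{4}$---that the paper keeps cleanly separated. In the paper the word $w=aba^{2}$ is a \emph{corollary} of this lemma, not an input to it; the proof of the lemma uses only $y_{1}=a$.

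The paper's argument is a single observation. Purely as a graph fact (no coloring involved), $\mathrm{D}_{4}$ is not $3$-reachable from $\mathrm{D}_{0}$ nor from $\mathrm{D}_{6}$. Since the first letter is $a$, the $a$-edge out of any state must land at a vertex from which $\mathrm{D}_{4}$ is $3$-reachable; hence every edge of $G_{\Phi}$ that terminates in $\mathrm{D}_{0}$ or $\mathrm{D}_{6}$ is coloured $b$. In the gadget $\mathbf{D}$ this already fixes one of the two outgoing labels at (almost) every vertex, and since each vertex has exactly two outgoing edges, the other label is then forced. That is the whole proof.

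By contrast, your forward simulation over $S_{0},S_{1},S_{2},S_{3}$ cannot get off the ground without knowing $y_{2},y_{3},y_{4}$: at step~$i$ you can only test ``does the $y_{i+1}$-edge stay on a length-$(4-i)$ path to $\mathrm{D}_{4}$?'' once $y_{i+1}$ is known, and you never say how you determine it. The ``timing bookkeeping'' you flag as the main obstacle is an artefact of this detour; the paper avoids it entirely. Also, your worry about parallel edges with the same head is moot: if both out-edges of some $\mathrm{D}_{j}$ share a head, the two labellings yield the same automaton, so nothing needs to be excluded.
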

\begin{proof}
Since $\mathrm{D}_{4}$ is not $3$-reachable from $\mathrm{D}_{0}$
nor $\mathrm{D}_{6}$, all the edges incoming to $\mathrm{D}_{0}$
and $\mathrm{D}_{6}$ are labeled by $b$. The remaining labeling
follows easily.\end{proof}
\begin{corollary}
It holds that 
\[
w=aba^{2}.
\]
\end{corollary}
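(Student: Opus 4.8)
The plan is to combine the two preceding lemmas. We already know that $w$ synchronizes $G_\Phi$ into $\mathrm{D}_4$ and that every edge leaving a state of $\mathbf{D}$ is colored exactly as in Figure~\ref{fig:All-1}; hence the action of $w$ on each of the states $\mathrm{D}_0,\dots,\mathrm{D}_7$ whose $w$-trajectory stays inside $\mathbf{D}$ is completely determined, independently of how the remaining edges of $G_\Phi$ will eventually be colored. In particular $\delta(Q,w)=\{\mathrm{D}_4\}$ forces $\mathrm{D}_4.w=\mathrm{D}_4$, so $w$ spells a closed walk of length $4$ through $\mathrm{D}_4$, and also $\mathrm{D}_i.w=\mathrm{D}_4$ for every other $i$.

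Since we already know $w\in a\{a,b\}^{3}$, only the eight words $aaaa$, $aaab$, $aaba$, $aabb$, $abaa$, $abab$, $abba$, $abbb$ remain. For each of the seven words different from $abaa$ I would exhibit one witness state of $\mathbf{D}$ --- in many cases $\mathrm{D}_4$ itself already suffices, because a wrong prefix of length $2$ or $3$ carries $\mathrm{D}_4$ off of itself --- whose image under that word, read directly off the fixed arrows of Figure~\ref{fig:All-1}, differs from $\mathrm{D}_4$. This contradicts $\delta(Q,w)=\{\mathrm{D}_4\}$ and eliminates those seven words, leaving $w=aba^{2}$.

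The proof is thus a bounded case analysis over eight words on the small gadget $\mathbf{D}$, and the only point needing care is keeping every elimination witness inside the part of the graph whose coloring is already fixed. This is precisely what the lemma on the coloring of $\mathbf{D}$ guarantees: all edges leaving $\mathbf{D}$ are colored, and the few edges entering $\mathbf{D}$ play no role in tracing images of $\mathbf{D}$-states, so the trajectories used are genuinely determined. I therefore do not expect a real obstacle here; the corollary drops out of inspecting Figure~\ref{fig:All-1}.
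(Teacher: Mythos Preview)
Your proposal is correct and matches the paper's (unwritten) argument: the corollary is stated without proof precisely because, once the previous lemma fixes the labeling of all edges outgoing from $\mathbf{D}$ as in Figure~\ref{fig:All-1}, the requirement that every $\mathrm{D}_i$ be sent to $\mathrm{D}_4$ by $w$ is a purely mechanical check inside the gadget $\mathbf{D}$, and among the eight candidates in $a\{a,b\}^{3}$ only $aba^{2}$ survives. Your plan of eliminating the other seven by tracing a single witness state through the fixed arrows is exactly how one reads this off the figure.
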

\begin{lemma}
For each $j=1,\dots,m$ we have
\[
\mathrm{C}_{j,0}.ab\in\left\{ x_{1},\dots,x_{n},\overline{x_{1}},\dots,\overline{x_{n}}\right\} .
\]
\end{lemma}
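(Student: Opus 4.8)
The plan is to read the claim as a statement about a single forced path. Since $w=aba^{2}$ (by the preceding corollary) is a reset word and, by the lemma just above, $Q.w=\{\mathrm{D}_{4}\}$, we have in particular $\mathrm{C}_{j,0}.aba^{2}=\mathrm{D}_{4}$ for every $j=1,\dots,m$. So I want to show that the intermediate vertex reached from $\mathrm{C}_{j,0}$ after the prefix $ab$ is one of the literal states $x_{i},\overline{x_{i}}$. The tools available are: the labeling of all edges outgoing from $\mathbf{D}$, which is already pinned down in Figure \ref{fig:All-1}; the fact that the first letter of $w$ is $a$; and the structure of the gadget $\mathbf{C}_{j}$ from Figure \ref{fig:C}.

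First I would compute, purely inside $\mathbf{D}$ (using Figure \ref{fig:All-1}) together with the few edges that enter $\mathbf{D}$ from the parts $\mathbf{C}_{j}$ and $\mathbf{V}_{i}$, the set $S_{2}$ of states $s$ with $s.a^{2}=\mathrm{D}_{4}$, and then the set $S_{3}$ of states $r$ with $r.a\in S_{2}$; both are short backward reachability computations over constantly many vertices. Next I would propagate two constraints into $\mathbf{C}_{j}$: that $w$ starts with $a$, and that the path from $\mathrm{C}_{j,0}$ must at every step remain in the set of states from which $\mathrm{D}_{4}$ is still reachable in the number of steps left (otherwise that marker never reaches $\mathrm{D}_{4}$ and $|Q.w|=1$ fails). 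Tracing the two forced steps $\mathrm{C}_{j,0}\overset{a}{\to}\mathrm{C}_{j,0}.a\overset{b}{\to}\mathrm{C}_{j,0}.ab$ under these constraints, and recalling that the three $\Phi$-dependent edges of $\mathbf{C}_{j}$ end precisely in the states labeled by the literals of $\mathcal{C}_{j}$, forces $\mathrm{C}_{j,0}.ab$ to land on one of those literal states, which in particular lies in $\{x_{1},\dots,x_{n},\overline{x_{1}},\dots,\overline{x_{n}}\}$.

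The main obstacle I expect is the bookkeeping of the internal labeling of $\mathbf{C}_{j}$: a priori only the first letter of $w$ and the labeling of $\mathbf{D}$ are known, so I must rule out every alternative labeling of the internal edges of $\mathbf{C}_{j}$ under which a path labeled $aba^{2}$ from $\mathrm{C}_{j,0}$ reaches $\mathrm{D}_{4}$ while passing through a non-literal vertex at time $2$. Concretely this means checking, for each of the constantly many vertices of $\mathbf{C}_{j}$ and $\mathbf{D}$ that $\mathrm{C}_{j,0}$ could reach after two steps, that either $\mathrm{D}_{4}$ is not $2$-reachable from it via $a^{2}$ (so that vertex is ruled out), or it is already a literal state. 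Once this finite case analysis — a direct inspection of Figures \ref{fig:C} and \ref{fig:All-1} — is carried out, the statement follows.
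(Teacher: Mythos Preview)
Your proposal is correct and follows essentially the same route as the paper: from $\mathrm{C}_{j,0}.aba^{2}=\mathrm{D}_{4}$ one needs that $\mathrm{C}_{j,0}.ab$ admits a $2$-path to $\mathrm{D}_{4}$, and a finite inspection of the states $2$-reachable from $\mathrm{C}_{j,0}$ shows that only the literal states qualify. The paper compresses this into a single sentence and, notably, uses the purely graph-theoretic criterion ``has a $2$-path to $\mathrm{D}_{4}$'' rather than your labeling-dependent $S_{2},S_{3}$ computation; since an $a^{2}$-labeled walk is in particular a $2$-path, the graph criterion suffices and spares you the bookkeeping over the unknown labelings of $\mathbf{C}_{j}$ and $\mathbf{V}_{i}$.
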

\begin{proof}
Any of the other states $2$-reachable from $\mathrm{C}_{j,0}$ does
not offer a $2$-path leading to $\mathrm{D}_{4}$.
\begin{figure}
\begin{centering}
\includegraphics{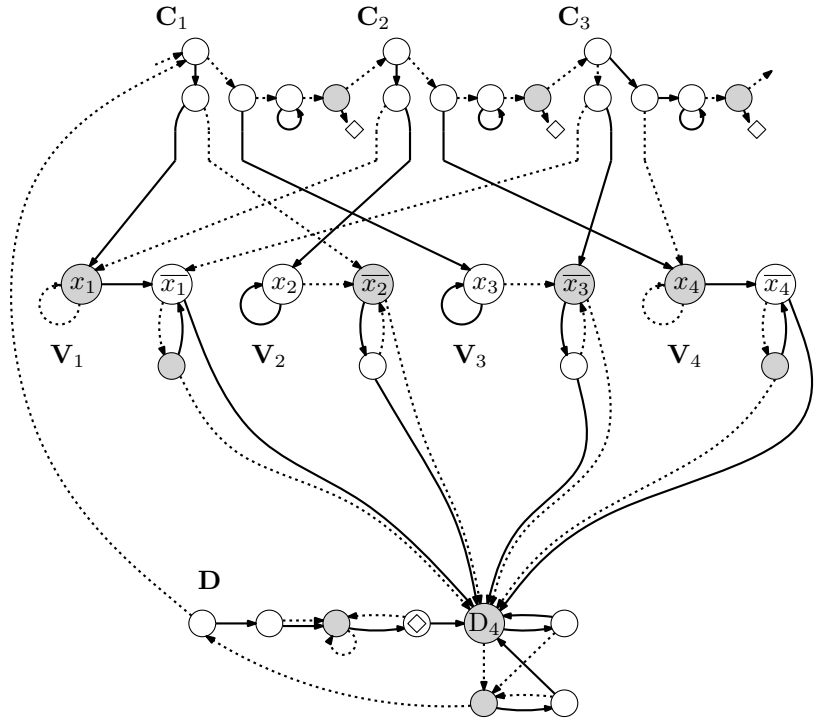}
\par\end{centering}

\caption{\label{fig:All-2}An example of $G_{\Phi}$ for $\Phi=\left(\ensuremath{x_{1}\vee\overline{x_{2}}\vee x_{3}}\right)\wedge\left(\ensuremath{x_{1}\vee x_{2}\vee x_{4}}\right)\wedge\left(\ensuremath{\overline{x_{1}}\vee\overline{x_{3}}\vee x_{4}}\right)$.\protect \\
The filling marks states that are active after applying $y_{1}y_{2}=ab$.}
\end{figure}
\end{proof}
\begin{lemma}
There are no $j,k\in1,\dots,m$ and $i\in1,\dots,n$ such that 
\[
\mathrm{C}_{j,0}.ab=x_{i}
\]
and
\[
C_{k,0}.ab=\overline{x_{i}}.
\]
\end{lemma}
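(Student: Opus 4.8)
The plan is to argue by contradiction. Suppose there are $i,j,k$ and a coloring $\delta$ with $\left|Q.w\right|=1$ such that $\mathrm{C}_{j,0}.ab=x_{i}$ and $\mathrm{C}_{k,0}.ab=\overline{x_{i}}$. By the preceding results we already know that $w=aba^{2}$ and that the synchronization state is $\mathrm{D}_{4}$, i.e. $q.w=\mathrm{D}_{4}$ for every $q\in Q$. Applying $w=aba^{2}$ to $\mathrm{C}_{j,0}$ and to $\mathrm{C}_{k,0}$ and using the hypotheses $\mathrm{C}_{j,0}.ab=x_{i}$, $\mathrm{C}_{k,0}.ab=\overline{x_{i}}$, I would first extract the two equalities $x_{i}.a^{2}=\mathrm{D}_{4}$ and $\overline{x_{i}}.a^{2}=\mathrm{D}_{4}$; so both literal vertices of $\mathbf{V}_{i}$ must carry an $a$-path of length two into $\mathrm{D}_{4}$.

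Next I would read off from Figure~\ref{fig:V} the two outgoing edges of each of $x_{i}$, $\overline{x_{i}}$ and $\mathrm{W}_{i}$, together with the incoming edges of $\mathrm{D}_{4}$ (Figures~\ref{fig:D} and~\ref{fig:All-1}). The key local observation is that of the two out-neighbours of $x_{i}$, only one can begin an $a$-edge reaching $\mathrm{D}_{4}$ in one further step: the out-neighbour that stays inside $\mathbf{V}_{i}$ has all of its own out-neighbours inside $\mathbf{V}_{i}$, and $\mathrm{D}_{4}\notin\mathbf{V}_{i}$, so it admits no such $a$-successor. Hence $x_{i}.a^{2}=\mathrm{D}_{4}$ forces which outgoing edge of $x_{i}$ receives the label $a$, and symmetrically $\overline{x_{i}}.a^{2}=\mathrm{D}_{4}$ forces the label at $\overline{x_{i}}$. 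I would then show that these two forced choices cannot coexist in a single valid $2$-coloring of $\mathbf{V}_{i}$: they conflict either on the edge shared between $x_{i}$ and $\overline{x_{i}}$ (violating determinism of $\delta$), or with the analogous constraint $\mathrm{W}_{i}.a^{2}=\mathrm{D}_{4}$, which would require $\mathrm{W}_{i}$ to send an $a$-edge simultaneously towards both literal vertices. That contradiction proves the lemma.

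The main obstacle is purely a finite bookkeeping one: one has to enumerate the at most two colorings of each of $x_{i}$, $\overline{x_{i}}$, $\mathrm{W}_{i}$ compatible with ``$a^{2}$ reaches $\mathrm{D}_{4}$'' and verify that no global choice survives --- the gadget $\mathbf{V}_{i}$ of Figure~\ref{fig:V} is built precisely so that the events ``$x_{i}$ works'' and ``$\overline{x_{i}}$ works'' are mutually exclusive. With this lemma together with the previous one (every $\mathrm{C}_{j,0}.ab$ is a literal vertex), the downward implication of the NP-completeness reduction is finished by reading off the assignment ``$x_{i}$ true iff some clause uses the vertex $x_{i}$'', which is well defined exactly by this lemma, and then checking that it satisfies $\Phi$.
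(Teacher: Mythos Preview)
Your approach is correct and essentially identical to the paper's: from $w=aba^{2}$ and the fact that synchronization occurs in $\mathrm{D}_{4}$ you derive $x_{i}.a^{2}=\mathrm{D}_{4}$ and $\overline{x_{i}}.a^{2}=\mathrm{D}_{4}$, and then argue that no labeling of the gadget $\mathbf{V}_{i}$ can realize both simultaneously. The paper's own proof is even terser---it simply states that ``it is easy to see that it is not possible to find such labeling''---so your attempt to spell out the case analysis is more detailed than what appears there; one small slip is your auxiliary constraint ``$\mathrm{W}_{i}.a^{2}=\mathrm{D}_{4}$'', which is not what follows (the actual constraint on $\mathrm{W}_{i}$ is $\mathrm{W}_{i}.aba^{2}=\mathrm{D}_{4}$), but the contradiction already arises from the two literal vertices alone, so this does not affect the argument.
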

\begin{proof}
If both $x_{i}$ and $\overline{x_{i}}$ are active after applying
$y_{1}y_{2}=ab$, there have to be $2$-paths labeled by $a^{2}$
from both the states $x_{i},\overline{x_{i}}$ to $\mathrm{D}_{4}$.
It is easy to see that it is not possible to find such labeling.\end{proof}
\begin{corollary}
There is a partial assignment making all the literals
\[
\mathrm{C}_{1,0}.ab,\mathrm{C}_{2,0}.ab,\dots,\mathrm{C}_{m,0}.ab
\]
satisfied, because none of them is the negation of another. Each clause
contains some of these literals.
\end{corollary}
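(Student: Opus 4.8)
The plan is to assemble the three preceding results of this subsection. The two lemmas immediately above give, for every $j$, that $\mathrm{C}_{j,0}.ab$ is a literal state, and that the family $L=\{\mathrm{C}_{1,0}.ab,\dots,\mathrm{C}_{m,0}.ab\}$ contains no complementary pair $x_i,\overline{x_i}$. From the latter property I would define a partial truth assignment $\rho$ by setting $x_i$ true whenever $x_i\in L$, setting $x_i$ false whenever $\overline{x_i}\in L$, and leaving the remaining variables unassigned. The absence of complementary pairs is exactly what makes $\rho$ well defined, and $\rho$ satisfies every literal listed in $L$.

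Next I would check that $L$ is a transversal of $\Phi$, i.e.\ that $\mathrm{C}_{j,0}.ab$ occurs in the clause $\mathcal{C}_j$. This is read off the gadget $\mathbf{C}_j$ of Figure \ref{fig:C}: the only literal states $2$-reachable from $\mathrm{C}_{j,0}$ are the three states labelled by the literals of $\mathcal{C}_j$, and by the first lemma of this subsection $\mathrm{C}_{j,0}.ab$ must be one of those literal states (any other $2$-reachable state offers no $a^2$-path into $\mathrm{D}_4$, hence cannot contribute to a synchronization taking place in $\mathrm{D}_4$). Thus $\mathrm{C}_{j,0}.ab\in\mathcal{C}_j$ for all $j$.

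Finally I would extend $\rho$ to an arbitrary total assignment $\rho'$: every clause $\mathcal{C}_j$ contains the literal $\mathrm{C}_{j,0}.ab$, which $\rho'$ evaluates to true, so $\rho'$ satisfies $\Phi$. This yields the upward implication of the reduction, namely that synchronizability of $G_\Phi$ by a length-$4$ word (for some labeling) forces $\Phi$ to be satisfiable. I do not expect a genuine obstacle here, since all the combinatorial content was already extracted by the two lemmas; the only points needing care are the bookkeeping — checkable directly against Figure \ref{fig:C} — that $\mathrm{C}_{j,0}.ab$ really is one of the three literals of $\mathcal{C}_j$, and the trivial observation that a consistent partial assignment always extends to an honest satisfying total assignment.
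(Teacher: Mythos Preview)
Your proposal is correct and follows essentially the same reasoning as the paper. In fact, the paper offers no separate proof of this corollary at all: the justification is folded into the statement itself (``because none of them is the negation of another''), and the sentence ``We are done, the existence of a satisfying assignment is guaranteed'' immediately follows. Your write-up simply spells out the details the paper leaves implicit, including the construction of the partial assignment and the check against Figure~\ref{fig:C} that $\mathrm{C}_{j,0}.ab$ must be one of the three literals of $\mathcal{C}_j$.
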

We are done, the existence of a satisfying assignment is guaranteed.\qed

\subsubsection*{The Downward Implication}

For a given satisfying assignment we make a coloring based on the
above-mentioned ideas and the example given by Fig. \ref{fig:All-2}. 
\begin{itemize}
\item For each $j$, the coloring of edges outgoing from $\mathrm{C}_{j,0},\mathrm{C}_{j,1},\mathrm{C}_{j,2}$
depends on which of the three literals of the clause $\mathcal{C}_{j}$
are satisfied by the assignment (the example assigns $x_{1}=\mathbf{1},x_{2}=\mathbf{0},x_{3}=\mathbf{0},x_{4}=\mathbf{1}$).
The $2$-path from $\mathrm{C}_{j,0}$ labeled by $ab$ should lead
to a state labeled by a satisfied literal. The edges outgoing from
$\mathrm{C}_{j,3}$ and $\mathrm{C}_{j,4}$ are colored in a constant
way. 
\item For each $i$, all the edges outgoing from the states of the $\mathbf{V}_{i}$
part are colored in one of two ways depending on the truth value assigned
to $x_{i}$. 
\item The edges outgoing from the states of $\mathbf{D}$ admit the only
possible coloring.
\end{itemize}
Note that in our example the edges outgoing from the states of $\mathbf{V}_{3}$
could be colored in the opposite way as well. None of the literals
$x_{3},\overline{x_{3}}$ is chosen by the coloring to satisfy a clause.

\subsubsection*{Strong Connectivity}

If there is a non-negated occurrence of each $x_{i}$ in $\Phi$,
the graph $G_{\Phi}$ is strongly connected. This assumption can be
easily guaranteed by adding tautological clauses like $x_{i}\vee\overline{x_{i}}\vee\overline{x_{i}}$.

\section{Further Research: SRCPW}

On the input of SRCP there is a prescribed length of a reset word
that should be used in the road coloring. But what if an exact reset
word (or a set of possible reset words) were prescribed? We call such
problem SRCPW:

\begin{flushleft}
\begin{tabular}{|>{\raggedright}p{25mm}>{\raggedright}p{94mm}|}
\hline 
\multicolumn{2}{|l|}{\noun{~~SRCPW}}\tabularnewline
\textbf{~~Input:} & Alphabet $I$, admissible graph $G=\left(Q,E\right)$ with out-degrees
$\left|I\right|$, $W\subseteq I^{\star}$\tabularnewline
\textbf{~~Output:} & Is there a coloring $\delta$ such that $\left|\delta\!\left(Q,w\right)\right|=1$
for some $w\in W$?\tabularnewline[2mm]
~~\textbf{Parameters:} & $\left|I\right|$, $t=\left|Q\right|$, $W\subseteq I^{\star}$ (non-numerical)\tabularnewline[2mm]
\hline 
\end{tabular}
\par\end{flushleft}

We have found out that even when restricted to $\left|I\right|=2$,
the fixed value $W=\left\{ abb\right\} $ makes the problem NP-complete.
This may seem quite surprising because we have shown above that SRCP
restricted to $\left|I\right|=2$ and $k=3$ is polynomially decidable.
\begin{theorem}
[forthcoming paper]\noun{SRCPW }restricted to $\left|I\right|=2$
and $W=\left\{ abb\right\} $ is NP-complete.
\end{theorem}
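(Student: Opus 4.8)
The plan is to establish membership in $\mathrm{NP}$ and then $\mathrm{NP}$-hardness by a reduction from 3-SAT, in the spirit of the construction used above for \noun{SRCP} with $\left|I\right|=2$ and $k=4$. Membership is immediate: a coloring $\delta$ of an admissible graph $G=\left(Q,E\right)$ with out-degrees $2$ is a string of length $\left|E\right|$, and given $\delta$ one checks in polynomial time whether $\left|\delta\!\left(Q,abb\right)\right|=1$. For hardness, from a $3$-CNF formula $\Phi=\bigwedge_{j=1}^{m}\mathcal{C}_{j}$ over variables $x_{1},\dots,x_{n}$ we build an admissible multigraph $G_{\Phi}$ with all out-degrees $2$ and show that the \noun{SRCPW} instance $\left(\left\{ a,b\right\} ,G_{\Phi},\left\{ abb\right\} \right)$ is a yes-instance --- equivalently, $G_{\Phi}\in\mathbb{G}_{abb}$ --- if and only if $\Phi$ is satisfiable.

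The graph $G_{\Phi}$ is assembled from a distinguished target vertex $q$, a constant-size \emph{drain gadget} $\mathbf{D}$ attached to $q$, a \emph{variable gadget} $\mathbf{V}_{i}$ for each variable (holding two literal-vertices $x_{i},\overline{x_{i}}$ together with a few auxiliary vertices), and a \emph{clause gadget} $\mathbf{C}_{j}$ for each clause; only the three edges per clause that join $\mathbf{C}_{j}$ to the literal-vertices of $\mathcal{C}_{j}$ depend on $\Phi$. The drain gadget plays the role of the part $\mathbf{D}$ in the $k=4$ reduction: in any coloring for which $abb$ synchronizes it forces the synchronization point to be $q$, forces the colors of the edges near $q$ and inside $\mathbf{D}$ to an essentially unique pattern, and routes every scaffolding vertex to $q$ along $abb$. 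Each variable gadget is built so that, conditioned on $abb$ being a reset word, its internal coloring must fall into exactly one of two admissible \emph{modes}, which we read as the truth value of $x_{i}$; the gadget is arranged so that in the true mode the suffix $bb$ leads from $x_{i}$ to $q$ but not from $\overline{x_{i}}$, and it is never possible to color so that both $x_{i}$ and $\overline{x_{i}}$ admit a $bb$-path to $q$ (the analogue of the ``no $\mathrm{C}_{j,0}.ab=x_{i}$ while $\mathrm{C}_{k,0}.ab=\overline{x_{i}}$'' lemma of the $k=4$ proof). Each clause gadget contains a vertex from which $bb$ reaches $q$ only if the first letter $a$, together with the second-letter color inside the gadget, routes into one of the three literal-vertices of $\mathcal{C}_{j}$ that is currently in true mode; every other coloring sends that vertex, after the first letter, into a vertex from which $q$ is not $bb$-reachable.

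For the forward implication a satisfying assignment dictates the mode of each $\mathbf{V}_{i}$ and a satisfied literal to route to in each $\mathbf{C}_{j}$; together with the forced coloring of $\mathbf{D}$ this gives a coloring under which $abb$ maps every vertex to $q$. For the converse, starting from an arbitrary coloring witnessing $G_{\Phi}\in\mathbb{G}_{abb}$ one first runs a short chain of lemmas --- exactly as in the $k=4$ proof (``synchronization takes place in $q$'', ``the edges of $\mathbf{D}$ are colored as in the figure'', and so on) --- to pin down the neighbourhood of $q$; this forces each $\mathbf{V}_{i}$ into one of its two modes, defining an assignment, and then forces each $\mathbf{C}_{j}$ to route through a literal true under that assignment, so $\Phi$ is satisfied. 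Finally $G_{\Phi}$ is admissible (uniform out-degree $2$ by construction, aperiodicity ensured by cycles of coprime lengths inside $\mathbf{D}$), and, as in the earlier reduction, one can make it strongly connected by adding tautological clauses such as $x_{i}\vee\overline{x_{i}}\vee\overline{x_{i}}$ and a few back-edges without affecting the equivalence.

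\textbf{The main obstacle} is to design the clause and variable gadgets so that fixing the reset word to the single word $abb$ genuinely rigidifies the coloring. Since \noun{SRCP} restricted to $\left|I\right|=2$ and $k=3$ lies in $\mathrm{P}$, any gadget that admitted an ``escape'' coloring synchronizing it via one of $aaa,aab,aba$ (or their $a\leftrightarrow b$ mirrors) would collapse the reduction; the rigidity to exploit is precisely that, once the word is pinned, the only remaining freedom can be forced down to one bit per variable. Concretely I expect the delicate step to be the backward direction --- proving that no coloring can drive a clause vertex to $q$ along $abb$ while all three of its literals are in false mode. This reduces to a finite shape analysis of length-$3$ reachability inside a single gadget (a case split over the colors of its $\mathcal{O}\!\left(1\right)$ edges), carried out on top of the drain-gadget lemmas that fix the picture around $q$; the admissibility and strong-connectivity bookkeeping is routine but, as in the $k=4$ case, a little fiddly.
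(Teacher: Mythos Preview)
The paper does not prove this theorem: it is explicitly tagged ``[forthcoming paper]'' and appears without argument in the ``Further Research'' section. There is therefore no proof in the paper to compare your proposal against.

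On its own merits, your proposal is a sensible high-level strategy --- reduction from 3-SAT via drain, variable, and clause gadgets, mirroring the $k=4$ construction --- and you correctly isolate the real difficulty: since \noun{SRCP} with $\left|I\right|=2$ and $k=3$ is in $\mathrm{P}$, the gadgets must be rigid specifically under the single word $abb$, with no escape through colorings that would have been synchronized by $aaa$, $aab$, or $aba$. But what you have written is a plan, not a proof. The gadgets are specified only by the properties you want them to have (``in the true mode the suffix $bb$ leads from $x_{i}$ to $q$ but not from $\overline{x_{i}}$'', ``it is never possible to color so that both \dots''), never constructed; and your closing paragraph openly concedes that designing them is ``the main obstacle'' still to be overcome. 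In reductions of this kind the entire content lives in the concrete gadget and the finite case analysis verifying its behavior; absent those, there is nothing to check, and the polynomial-time status of the $k=3$ problem means that a generic attempt at such gadgets will typically fail. Until you exhibit explicit $\mathbf{D}$, $\mathbf{V}_{i}$, $\mathbf{C}_{j}$ and carry out the backward-direction case split, the argument has a genuine gap.
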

Together with related results from \cite{ROM8} we get the situation
depicted by Table \ref{tab:Complexities-of-SRCPW}. Clearly there
is a wide range of open problems about SRCPW and its restrictions
to particular values of $W$.

\begin{table}[H]
\begin{centering}
\begin{tabular}{|>{\centering}p{20mm}|>{\centering}m{17mm}>{\centering}m{5mm}|>{\centering}m{17mm}>{\centering}m{5mm}|}
\cline{2-5} 
\multicolumn{1}{>{\centering}p{20mm}|}{} & \multicolumn{2}{c|}{$\left|I\right|=2$} & \multicolumn{2}{c|}{$\left|I\right|=3$}\tabularnewline
\hline 
$W=\left\{ aaa\right\} $ & \qquad{}P &  & \qquad{}P & \tabularnewline
\hline 
$W=\left\{ aab\right\} $ & \qquad{}P &  & \qquad{}P & \tabularnewline
\hline 
$W=\left\{ aba\right\} $ & \qquad{}P &  & \qquad{}P & \tabularnewline
\hline 
$W=\left\{ abb\right\} $ & \qquad{}NPC & $\blacklozenge$ & \qquad{}Open & \tabularnewline
\hline 
$W=\left\{ abc\right\} $ & \qquad{}\textemdash{} &  & \qquad{}Open & \tabularnewline
\hline 
\end{tabular}
\par\end{centering}

\medskip{}

\caption{\label{tab:Complexities-of-SRCPW}Complexities of SRCPW restricted
to particular values of $W$ and $\left|I\right|$. The positive results
come from \cite{ROM8}.}
\end{table}

\bibliographystyle{C:/Users/Vojta/Desktop/SYNCHRO2/splncs03}
\bibliography{C:/Users/Vojta/Desktop/SYNCHRO2/bib/ruco}

\end{document}